\newtheorem{theorem}{Theorem}
\begin{document}

\title{Secure Data Transmission in Cooperative Modes: Relay and MAC}


\author{\IEEEauthorblockN{Samah A. M. Ghanem\IEEEauthorrefmark{1},
Munnujahan Ara\IEEEauthorrefmark{2},\\
\IEEEauthorblockA{{\IEEEauthorrefmark{1}}Mobile Communications Department,\\
Eurecom Institute, Campus SophiaTech, 06410 Biot, France\\
Email: samah.ghanem@eurecom.fr;}\\
\vspace{0.2cm}
\IEEEauthorblockA{{\IEEEauthorrefmark{2}}Mathematics Department,\\
Khulna University, Khulna-9208, Bangladesh\\
Email: munnujahan@gmail.com}}}

\maketitle

\begin{abstract}
\boldmath
Cooperation in clouds provides a promising technique for 5G wireless networks, supporting higher data rates. Security of data transmission over wireless clouds could put constraints on devices; whether to cooperate or not. Therefore, our aim is to provide analytical framework for the security on the physical layer of such setup and to define the constraints embodied with cooperation in small size wireless clouds. In this paper, two legitimate transmitters Alice and John cooperate to increase the reliable transmission rate received by their common legitimate receiver Bob, where one eavesdropper, Eve exists. We provide the achievable secure data transmission rates with cooperative relaying and when no cooperation exists creating a Multiple Access Channel (MAC). The paper considers the analysis of different cooperative scenarios: a cooperative scenario with two relaying devices, a cooperative scenario without relaying, a non-cooperative scenario, and cooperation from one side. We derive analytical expressions for the optimal power allocation that maximizes the achievable secrecy rates for the different set of scenarios where the implication of cooperation on the achievable secrecy rates was analyzed. We propose a distributed algorithm that allows the devices to select whether to cooperate or not and to choose their optimal power allocation based on the cooperation framework selected. Moreover, we defined distance constraints to enforce the benefits of cooperation between devices in a wireless cloud.
\end{abstract}

\vspace{0.3cm}

\begin{keywords}
Achievable secrecy rate, Cooperation, MAC, Relaying.
\end{keywords}

\IEEEpeerreviewmaketitle
%

\section{Introduction}
%
The Wiretap channel models scenarios of the data transmission under security attacks on the physical layer~\cite{Wyner1975}. This paper focuses on the physical layer security in a wiretap channel model of interest; including two legitimate transmitters that are relay devices, one legitimate receiver, and one eavesdropper. Several optimal power allocation interpretations that aim to maximize the secure and reliable information rates are existing in the literature. Such designs were done for different channel models, for example, for the two user MAC Gaussian channel~\cite{SamMAC12}, or for cooperative virtual MIMOs~\cite{SamMul13} by directly maximizing the mutual information, or via optimizing other design criterion such as, minimizing the mean square error~\cite{Arahugo11}, or minimizing the bit error rate~\cite{BarRod2006}. In~\cite{DongPoor}, the authors address secure communications of one source-destination pair with the help of multiple cooperating relays in the presence of one or more eavesdroppers with different cooperative schemes. In \cite{RMYS06}, the authors provide upper bounds of the achievable rates of a discrete memoryless MAC channel with confidential messages are to be transmitted in perfect secrecy. In~\cite{HElGamal}, the authors devise several cooperation strategies and characterize the corresponding achievable rate-equivocation region. They consider a deaf helper phenomenon, where the relay is able to facilitate secure communications while being totally ignorant of the transmitted messages. In~\cite{Ooahama}, the author studied the security of communication for the relay channel under the situation that some of the transmitted messages are confidential to the relay. Moreover, in~\cite{DongPoor2}, the authors considered cooperative jamming where a relay equipped with multiple antennas transmits a jamming signal to create interference at the eavesdropper. They proposed design methods to determine the antenna weights and transmit power of source and relay, so that the system secrecy rate is maximized. 

\vspace{0.3cm}

In this paper, we consider a scenario which is more of practical relevance where two side cooperation exists. The usual assumption of one side cooperation is addressed for analytical purposes only. However, we consider cooperation of bi-directional two relay devices to show that 'a real egoistic behavior is to cooperate',~\cite{FF}. Our work differs from other works not only on this assumption, but we have also considered that the devices under certain distance constraints switch their mode of cooperation from Relay to MAC or vice versa or choose not to cooperate. This assumption is of particular relevance in a device to device cooperation within a wireless cloud to assure that cooperation will not harm one or the other device reliable and secure transmission rates. 

\vspace{0.3cm}

Recently, another branch in the field of physical layer security relied on game theory. The maximization of reliable information rates with different channel models with cooperative relaying and with the existence of a jammer are considered in~\cite{IFS08}. Optimal power allocation strategies were derived for a zero sum game with an unfriendly jammer in~\cite{Ara2012}. In~\cite{MXE2010}, the authors studied the achievable secrecy rates for a non-cooperative zero sum game with a jamming relay assisting the eavesdropper. They found a saddle point solution of such game. In~\cite{MMH2010}, the authors propose a distributed game-theoretic method for power allocation in bi-directional cooperative communication. They showed that their method reaches equilibrium in one stage, and proved the benifits of bi-directional cooperation between nodes closer to each other. In~\cite{GLOBCOM08}, the authors address the power allocation problem for interference relay channels. They model the problem as a strategic non-cooperative game and show that this game always has a unique Nash equilibrium. In~\cite{girlfriend}, the authors investigate the interaction between the source and friendly jammers, they introduce a game theoretic approach in order to obtain a distributed solution. Moreover, in~\cite{MHM12}, the authors investigate optimal power allocation strategies for OFDM wiretap channels with the existence of friendly jammers. 

\vspace{0.3cm}

In this paper, the wiretap channel model of interest has no jammers, however, bi-directional cooperative communication between relay devices are considered. Here, we are particularly concerned about optimal cooperative power allocation strategies which allow maximum reliable and achievable secrecy rates. Despite the fact that our problem can be formulated using game theory. However, the focus here is much more in finding strategies in a distributed sense which can allow the devices to choose first their cooperation mode, then to allocate their power control strategy accordingly.

\vspace{0.3cm}

The broadcast nature of a wireless medium allows multiple devices to transmit and receive simultaneously. Such nature, allows some devices to choose to cooperate in fixed or mobile clouds for their shared benefits, or to overhear and target multiple transmitting devices through their direct transmission, or over their relayed transmission. In this paper, we focus on a similar setup, where transmitting devices cooperate, while an eavesdropper overhear their own transmissions, assuming that this eavesdropper is only overhearing their own direct transmissions. In fact, this assumption is basically based on the lack of knowledge of the eavesdropper - who aims to decode their transmitted messages - that a message of one transmitter could be mixed over time or that any cooperation could exist. This assumption could also simplify the mathematical setup of the problem. Of particular relevance are the benefits of cooperation to secure data transmission, and more relevant is to study when and where cooperation should exist, building a framework of distance constraints which could allow devices in a cloud to decide to go for cooperation, to cooperate from one side, not to cooperate, or to change location avoiding any distance attacks. We mean by a distance attack, is the capability of one eavesdropper device to experience a better version of the transmitted message than the legitimate receiving device. Under such distance constraints, the legitimate transmitters and receivers could choose to move far from an attacking device as a defense strategy.

\vspace{0.3cm}

In this paper, we build a framework for the achievable secrecy rates - defined by upper bounds for relay devices - capitalizing on the achievable rates for MAC channels. Then we derive the optimal power allocation strategies under different scenarios, where two legitimate transmitters/receivers Alice and John cooperate in a bi-directional way to increase their secrecy rate, i.e., to increase the secure and reliable information transmission rate received by Bob, their common legitimate receiver during which an eavesdropper, Eve tries to eavesdrop both. Four scenarios have been analyzed, a cooperative scenario where relaying is used between Alice and John, cooperation without relaying, a non-cooperative scenario, and a scenario with cooperation from one side. In the four scenarios, the optimal power allocation for both legitimate transmitters has been derived to maximize their secure achievable rates against Eve. The solution set is the optimal response from Alice and John. We finally consider the distance of Eve as the limiting constraint that defines how much the cooperation will be of benefit.

\vspace{0.3cm}

The paper is organized as follows, section II explains the model used throughout the paper, section III introduces the model achievable secrecy rate regions. In section IV, the scenarios, the implications on their achievable secrecy rates, the formulation of the optimization problem, and the optimal power allocation are introduced. Section V builds upon distance constraints for the model under study. Section VI presents the proposed algorithm, and we conclude the paper with numerical and analytical results.
%

\section{SYSTEM MODEL}
\label{Problem}
%

We consider a model that includes two legitimate transmitters Alice and John, and one common legitimate receiver, Bob. One eavesdropper Eve tries to decrease the security of both transmitters trying to decode the messages received by both transmitters. The communications between different parties are done over a point to point bi-directional links. Figure \ref{fig:figure-1} illustrates the system model\footnote{Notice that the model considers that the power $P_A$, $P_J$ for Alice and John, respectively is divided between the main transmitted signals and the relayed ones.}.
\begin{figure}[ht!]
    \begin{center}
      \mbox{\includegraphics[height=3in,width=3.5in]{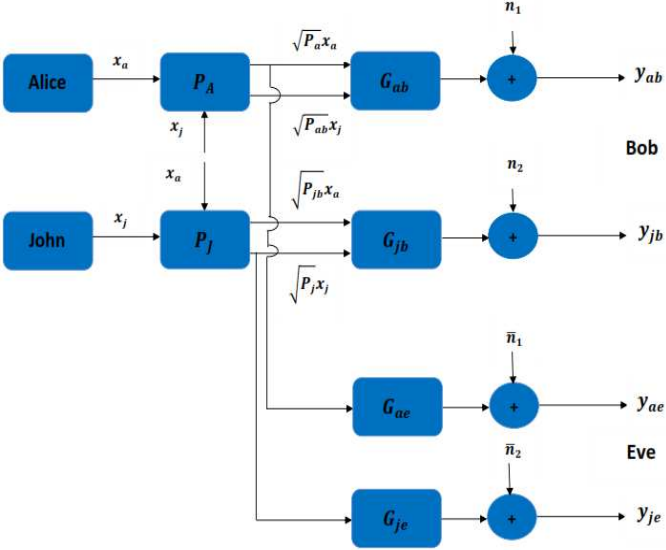}}
          \caption{The system model.}
    \label{fig:figure-1}
    \end{center}
    \label{figure-1}
\end{figure}

\vspace{0.3cm}

The transmitted message from transmitter $i$ is defined as $x_{i}$, and the received message by the receiver $k$ is defined as $y_{k}$. Considering that Alice and John are relay devices who cooperate in relaying each others data to Bob who receives two vectors from Alice and from John, assuming that each will relay a replica of the other's main message as follows,

\begin{equation}
y_{ab}= G_{ab}\sqrt{P_a} x_{a}+G_{ab}\sqrt{P_{ab}} x_{j} + n_{1}.
\end{equation}
\begin{equation}
y_{jb}= G_{jb}\sqrt{P_j} x_{j}+G_{jb}\sqrt{P_{jb}} x_{a} + n_{2}.
\end{equation}

However, Eve receives two vectors from Alice and John; assuming that Eve will receive the main message of each via overhearing, and will not be aware of the relayed part, this assumption is done for the sake of simplicity, as follows,

\begin{equation}
y_{ae}= G_{ae}\sqrt{P_a} x_{a}+ \bar{n}_{1}.
\end{equation}
\begin{equation}
y_{je}= G_{je}\sqrt{P_j} x_{j}+ \bar{n}_{2}.
\end{equation}

$y_{ab}\in \mathbb{C}^n$ and $y_{jb}\in \mathbb{C}^n$ represent the received vectors of complex symbols at Bob's side from Alice and John; respectively, $y_{ae}\in \mathbb{C}^n$ and $y_{je}\in \mathbb{C}^n$ represent the received vectors of complex symbols at Eve's side from Alice and John; respectively. $x_{a}\in \mathbb{C}^n$ and $x_{j}\in \mathbb{C}^n$ represent the vectors of complex transmit symbols with zero mean and identity covariance $\mathbb{E}[x_{a}x_{a}^{\dag}], \mathbb{E}[x_{j}x_{j}^{\dag}]$, respectively. $n_{1}\in \mathbb{C}^n$, $n_{2}\in \mathbb{C}^n$, $\bar{n}_{1}\in \mathbb{C}^n$, and $\bar{n}_{2}\in \mathbb{C}^n$ represent vectors of circularly symmetric complex Gaussian noises with zero mean and identity covariance. $G_{ik}$ represent the complex gains\footnote{Notice that the channel gains are considered to be fixed over the transmission time of each symbol $(x_a,x_j)$. This means that the transmission time is contained in the interval of the coherence time of the channel between each legitimate transmitter and the legitimate receiver. Therefore, $G_{ab} (\sqrt{P_a}x_a+\sqrt{P_{ab}}x_j)$ and $G_{jb} (\sqrt{P_{jb}}x_a+\sqrt{P_{j}}x_j)$ are associated to the transmission of the main and relayed transmitted symbols over each link.} of the channels between transmitter $i$ and receiver $k$. $\sqrt{P_{jb}}$ and $\sqrt{P_{ab}}$ represent the relay power used by John and Alice  respectively to relay each others data. $\sqrt{P_{a}}$ and $\sqrt{P_{j}}$ represent the transmitted power for Alice and John used respectively to transmit their own data.

\vspace{0.3cm}

If  relaying is precluded, the second sum term including $\sqrt{P_{jb}}$ and $\sqrt{P_{ab}}$ will be omitted from the equations. The achievable secrecy rate that each of the legitimate devices will try to maximize is called (Cs), assuming Maximum Ratio Combiner (MRC) at Bob's side.
%

\section{THE SECURE ACHIEVABLE RATE REGIONS}
\label{THE SECURE ACHIEVABLE RATE REGIONS}
%

\subsection{Achievability Framework}
%

We can consider that the achievable secrecy rate regions for the model in Figure 1 with two relay devices and one eavesdropper coincide with that of the MAC with one common eavesdropper, see~\cite{RMYS06},~\cite{Gamal2006},~\cite{gerhard11}. Therefore, we can conclude the following theorem.

\vspace{0.3cm}

\begin{theorem}
(Achievability): The achievable secure rate regions for the model in Figure 1 is upper bounded by the "secure" achievable rates for a MAC channel corresponding to each transmitter and their common receiver, and the two legitimate transmitters and their common eavesdropper. The rates in the closure of the union of all $(R1,R2)$ satisfy,

\begin{equation}
R1\leq I(x_{a};Y_{b}|x_{j})-I(x_{a};Y_{e}|x_{j}). 
\end{equation}
\begin{equation}
R2\leq I(x_{j};Y_{b}|x_{a})-I(x_{j};Y_{e}|x_{a}). 
\end{equation}
\begin{equation}
R1+R2\leq I(x_{a},x_{j};Y_{b})-I(x_{a},x_{j};Y_{e}). 
\end{equation}

\end{theorem}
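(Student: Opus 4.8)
The plan is to establish the secrecy rate region by invoking the known achievability results for the multiple-access wiretap channel (MAC-WT) and then arguing that the two-relay cooperative model of Figure 1 is subsumed by this structure. The bounds to prove are the standard corner-point and sum-rate constraints on $(R1,R2)$ expressed as differences of mutual informations — the legitimate "main channel" term $I(\cdot;Y_b|\cdot)$ minus the eavesdropper "wiretap" term $I(\cdot;Y_e|\cdot)$. My first step would be to recall the coding-theoretic machinery behind the MAC with confidential messages: a random binning / stochastic-encoding scheme in which each transmitter uses a double-indexed codebook, the extra randomness being sacrificed to confuse Eve. The equivocation analysis then yields exactly the three bounds in the statement.

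**First I would** set up the achievability argument for the MAC-WT region as in Tekin--Yener and Ekrem--Ulukus (the cited \cite{RMYS06}, \cite{Gamal2006}, \cite{gerhard11}). Each transmitter $i\in\{a,j\}$ generates a codebook of $2^{n(R_i+R_i')}$ codewords drawn i.i.d.\ from the input distribution, partitioned into $2^{nR_i}$ bins of $2^{nR_i'}$ codewords each; the message selects the bin and a uniform random index selects the codeword within it. The legitimate receiver Bob, using the MRC of his two received vectors $y_{ab}$ and $y_{jb}$, performs joint typicality decoding, which succeeds provided the total rates $R_i+R_i'$ lie inside the ordinary MAC capacity region for the channel $p(y_b\mid x_a,x_j)$ — giving the achievability side of $R1+R1'\le I(x_a;Y_b|x_j)$, $R2+R2'\le I(x_j;Y_b|x_a)$, and $R1+R1'+R2+R2'\le I(x_a,x_j;Y_b)$.

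**Next I would** carry out the equivocation computation at Eve. The randomization rates $R_i'$ must be large enough to saturate Eve's observation: choosing $R_1'=I(x_a;Y_e|x_j)$, $R_2'=I(x_j;Y_e|x_a)$ (and the joint term for the sum) ensures that the fictitious-message indices exhaust Eve's decoding capability, so that $H(W_i\mid Y_e)\to nR_i$ and perfect (weak) secrecy holds. Substituting these randomization rates into the MAC decodability constraints and eliminating $R_i'$ produces precisely the three difference-of-mutual-information bounds claimed. The last point to address is the model-specific reduction: I must justify that Eve's channels $y_{ae},y_{je}$ in (3)--(4) — which by the paper's simplifying assumption carry only the direct messages and not the relayed replicas — together with Bob's combined observation, still fit the MAC-WT template, so that the secrecy rate of the cooperative relay system is \emph{upper bounded} by the MAC-WT region; the relaying merely reshapes the effective main-channel gains without giving Eve additional leverage.

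**The hard part will be** the equivocation bound, i.e.\ showing rigorously that the chosen randomization rates drive Eve's uncertainty about the confidential messages to its maximum. This requires the standard but delicate lemma that, conditioned on the bin index, Eve can resolve at most $I(\cdot;Y_e\mid\cdot)$ bits of the within-bin randomness, so the leftover randomness shields the message; controlling the $o(n)$ terms in $\tfrac1n H(W\mid Y^n_e)$ and verifying the Markov/independence structure induced by cooperation is where the real work lies. Because the theorem only asserts an \emph{upper bound} (an outer/achievability characterization inherited from the MAC-WT), I would treat the three inequalities as consequences of the cited region and emphasize that equality of the regions hinges on the assumption that Eve overhears only direct transmissions, which I would state explicitly as the hypothesis making the reduction exact.
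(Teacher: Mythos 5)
Your proposal is correct in outline, but it follows a genuinely different route from the paper. The paper's proof never touches codebooks or equivocation: it simply rewrites the three bounds with auxiliary random variables $U$, $V$ and a time-sharing variable $Q$ (e.g. $R1\leq I(x_{a};Y_{b}|x_{j},Q)-I(x_{a};Y_{e}|x_{j},Q)$), asserts that the joint distribution factors as a product $p(u)\,p(x_{a}|u)\,p(v)\,p(x_{j}|v)\,p(y|x_{a},x_{j})$ along the chain $U\rightarrow V\rightarrow (X_{a},X_{j})\rightarrow (Y_{b},Y_{e})$, and concludes ``through Markovity'' --- in effect a reduction to the known MAC-with-confidential-messages region of the cited references, with all coding-theoretic content left implicit. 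You instead reconstruct that cited region from first principles: double-indexed random binning, joint-typicality decoding at Bob, and the equivocation computation at Eve with randomization rates $R_i'$ chosen to saturate Eve's channel, followed by elimination of the $R_i'$ to obtain the three difference-of-mutual-information bounds. Your route is more laborious but self-contained, and it makes explicit exactly where the secrecy analysis and the model assumption (Eve overhears only the direct transmissions) enter; the paper's route is compact but is essentially an appeal to prior results, and as written it establishes nothing beyond the factorization it postulates. One point to watch in your plan: the standard Tekin--Yener individual bounds carry unconditional eavesdropper terms $I(x_{i};Y_{e})$, whereas the theorem states conditional ones; you should note that because $x_{a}$ and $x_{j}$ are independent and Eve's observations $y_{ae}$, $y_{je}$ each depend on only one input, one has $I(x_{a};Y_{e}|x_{j})=I(x_{a};y_{ae})$, so the two forms coincide in this model and your elimination step indeed yields the stated region.
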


\vspace{0.3cm}

\begin{proof}
We can re-write the rate regions considering some auxiliary random variables $U$, $V$, and a time sharing variable $Q$.

\begin{equation}
R1\leq I(x_{a};Y_{b}|x_{j},Q)-I(x_{a};Y_{e}|x_{j},Q). 
\end{equation}
\begin{equation}
R2\leq I(V,x_{j};Y_{b}|U,x_{a},Q)-I(V,x_{j};Y_{e}|U,x_{a},Q). 
\end{equation}
\begin{equation}
R1+R2\leq I(x_{a},x_{j},U,V;Y_{b}|Q)-I(x_{a},x_{j},U,V;Y_{e}|Q). 
\end{equation}

For some product distribution $U\rightarrow V\rightarrow (X_{a}, X_{j})\rightarrow y_{ab}, y_{jb}\rightarrow Y_{b}$, $y_{ae},~y{je}\rightarrow Y_{e}$. The joint probability distributions $p(u,v,x_{a},x_{j},y_{ab},y_{jb})$ factors as, $p(u)p(x_{a}|u)p(v)p(x_{j}|v)p(y_{ab}),y_{jb}|x_{a},x_{j}$ and the joint probability distributions $p(u,v,x_{a},x_{j},y_{ae},y_{je})$ factors as, $p(u)p(x_{a}|u)p(v)p(x_{j}|v)p(y_{ae}),p(y_{je}|x_{a},x_{j})$; and the time sharing variable $Q$ and through Markovity, $U\rightarrow V\rightarrow (X_{a},X_{j})\rightarrow (Y_{b},Y_{e})$, the claim is proved.
\end{proof}
%

\subsection{Problem Formulation}
\label{Problem Formulation}

The maximum achievable secrecy rate for Alice,
\begin{equation}
 Cs1 = max \ R_{ajb}-R_{ae}
\end{equation}
subject to the power constraint, 
\begin{equation}
 P_{j}+P_{jb}\leq P_{J}
\end{equation}
Therefore,
\begin{equation}
 {P_{jb}}^{*} = arg~\max_{\zeta P_{jb},P_{ab}/\zeta} R_{ajb}-R_{ae},
\end{equation}
with~${P_{j}}^{*}=P_{J}-{P_{jb}}^{*}$, and $0 \leq \zeta \leq 1$

\vspace{0.3cm}

$R_{ajb} = log (1+{SNR}_{ab}+{SNR}_{ajb})$, when John is relaying Alice data. When no relaying is considered $R_{ajb}$ is the same as,  

\begin{equation}
R_{ab} =I(x_{a};y_{ab})=log(1+{SNR}_{ab}).
\end{equation}
\begin{equation}
R_{ae} =I(x_{a};y_{ae})=log(1+{SNR}_{ae}).
\end{equation}

The maximum achievable secrecy rate for John,

\begin{equation}
 Cs2=max \ R_{jab}-R_{je}
\end{equation}
subject to the power constraint, 
\begin{equation}
P_{a}+P_{ab}\leq P_{A}
\end{equation}
Therefore,
\begin{equation}
 {P_{ab}}^{*} = arg~\max_{P{ab/\zeta,\zeta P_{jb}}} R_{jab}-R_{je},
\end{equation}
with~${P_{a}}^{*}=P_{A}-{P_{ab}}^{*}$, and $0 \leq \zeta \leq 1$

\vspace{0.3cm}

$R_{jab}=log (1+{SNR}_{jb}+{SNR}_{jab})$, when Alice is relaying John data. When no relaying is considered $R_{jab}$ is the the same as,
\begin{equation}
R_{jb}=I(x_{j};y_{jb})=log(1+{SNR}_{jb}).
\end{equation}
\begin{equation}
R_{je}=I(x_{j};y_{je})=log(1+{SNR}_{je}).
\end{equation}

Where $SNR_{ik}$ is the received signal to noise ratio between transmitter $i$ and receiver $k$.
\begin{equation}
{SNR}_{ik}=\frac{G_{ik}P_{i}}{\sigma^{2}}
\end{equation}	

$G_{ik}$ is the channel gain between different devices, and $\sigma^{2}$ is the noise power, considered as fixed over all links.

\vspace{0.3cm}

Given that the framework will include bi-directional cooperation, and including relaying with $\alpha$ cooperation level between Alice and John. The received $SNR$ via the path of transmitter $i$, relay point $r$, and receiver $k$ will be considered as follows,~\cite{MMH2010}
\begin{equation}
{SNR}_{irk}=\frac{G_{ir}G_{rk}P_{i}P_{rk}}{\sigma^{2}(G_{ir}P_{i}+ G_{rk}P{_{rk}}+\sigma^{2})}
\end{equation}

Therefore, we need to analyze a set of scenarios where we can derive the optimal power allocation required to maximize the achievable secrecy rates, and therefore to get insights on the effect of cooperation and relaying on the secrecy rates. In particular, we will devise the optimal power allocation set $({P_{ab}}^{*},{P_{jb}}^*)$ used for relaying, and $({P_{a}}^{*},{P_{j}}^{*})$ used for main data transmission. Where ${P_a}^{*}=P_A-{P_{ab}}^{*}$, and ${P_j}^*=P_J-{P_{jb}}^{*}$.

\section{COOPERATION FRAMEWORK}
\label{COOPERATION FRAMEWORK}

The cooperation setup in this paper is between two parties who cooperate to reach their optimum strategies in service request way, such that the one who request the relay service will follow a strategy of cooperation defined by the other device, where both devices at the end cooperate in relaying each others data, or not, in different cooperation levels. In particular, the mathematical formulation will show how cooperation is induced into the convex optimization problems. Based on the objective functions defined for each device, the devices choose the optimal power allocation, that may correspond to a cooperation decision when cooperation is of benefit or to minimal cooperation in a multiple access channel (MAC) mode, or no cooperation when there is no benefit expected.

\vspace{0.3cm}

Let's consider the objective functions per cooperative device, Alice with $Cs1$, and John with $Cs2$. However, distance considerations will be induced into the mathematical formulation to evaluate when the cooperation will be of benefit; this will be discussed later in section VIII in the paper.
%

\section{COOPERATIVE RELAYS}
\label{COOPERATIVE RELAYS}
%

Within the different scenarios considered, Eve is trying to eavesdrop both Alice and John. First, we consider the scenario when John is trying to relay Alice data with power $P_{jb}$ and Alice is trying to relay John data with power $P_{ab}$ using the shared bi-directional link between them (see Figure 1). Both relays utilize Amplify and Forward (AF) protocol for cooperation. The cooperation level defines the main cooperation point in the mathematical formulation of the optimization problem. Second, we consider the scenario where there is cooperation without relaying data. Third, we consider a scenario where no cooperation exists. Fourth, we consider a scenario when there is cooperation from one side and without relaying data. The later two scenarios considered give insightful solutions through which cooperation between devices in wireless clouds can be evaluated from a secrecy perspective. The optimization of each objective function is solved such that it is the solution of the derivative of the Lagrangian and applying the Karush-Kuhn-Tucker (KKT) conditions~\cite{Boyd04}. In particular, the achievable secrecy rate is maximized subject to fixed total power constraint for each transmitter. The solution of such optimization problem is the optimal power allocation that each device will admit.
%

\subsection{Cooperative Scenario with Relaying}
%

Consider the model in  Figure \ref{fig:figure-1}. To define the problem from a MAC setup perspective as introduced in Theorem 1. The achievable rate regions will follow the regions in Theorem 1. Therefore, the achievable secrecy rate and their corresponding optimization problem can be written as follows,
\begin{equation}
{Cs1}_{\alpha P_{jb}}=\max_{\alpha P_{jb}} \ log(1+{SNR}_{ab}+{SNR}_{jb})-log(1+{SNR}_{ae}).
\end{equation}
Subject to, 
\begin{equation}
P_{j}+P_{jb}\leq P_{J}
\end{equation}
\begin{equation}
{Cs2}_{\frac {P_{ab}}{\alpha}}=\max_{\frac {P_{ab}}{\alpha}} \ log(1+{SNR}_{ah}+{SNR}_{jb})-log(1+S{NR}_{je}).
\end{equation}
Subject to, 
\begin{equation}
P_{a}+P_{ab}\leq P_{A}
\end{equation}
%

However, such setup introduces a maximization over the upper bound of the achievable rates of the MAC at Bob without the upper bound on the eavesdropper MAC channel due to our previous assumptions. Therefore, this way the problem is not well defined, and we don't want to substitute the whole received SNR from Alice to Bob, or from John to Bob in both objective functions. Then, we will not be exactly optimizing over the MAC upper bound, but we will be optimizing over the relay capacity which induces the received SNR over the relay bi-directional link between Alice and Bob. It follows that, the achievable secrecy rate and their corresponding optimization problem will be as follows,
\small
\begin{equation}
\label{5}
{Cs1}_{\alpha P_{jb}}=\max_{\alpha P_{jb}} \ log(1+{SNR}_{ab}+{SNR}_{ajb})-log(1+{SNR}_{ae})
\end{equation}
\normalsize
Subject to, 
\begin{equation}
\label{5_sub}
P_{j}+P_{jb}\leq P_{J}
\end{equation}
\small
\begin{equation}
\label{6}
{Cs2}_{\frac{P_{ab}}{\alpha}}=\max_{\frac{P_{ab}}{\alpha}} \ log(1+{SNR}_{jb}+{SNR}_{jab})-log(1+{SNR}_{je})
\end{equation}
\normalsize
Subject to,
\begin{equation}
\label{6_sub}
P_{a}+P_{ab}\leq P_{A}
\end{equation}

\vspace{0.3cm}

Since the one who is providing the service first dictates the cooperation level, the formulation is defined as in~\eqref{5} and~\eqref{6}. Let Alice be the one who first request the relay service from John. Therefore, the power John decides to cooperate with will dictate the response of Alice. Hence, the first objective (utility) $Cs1$ will lead to the optimal relay power $P_{jb}$: John uses to relay Alice data so that he helps in increasing her utility, i.e., increasing her achievable secrecy rate. On the other hand, the second objective (utility) $Cs2$ is to maximize John achievable secrecy rate by letting Alice relay the data of John to Bob, thus we derive the optimal relay power $P_{ab}$: Alice uses to relay John data to Bob. $\lambda$ is a constant corresponding to the Lagrange multiplier. Thus, using the KKT conditions, see~\cite{Boyd04}, it follows an optimal power allocation policy. In particular, the point of optimality is decided by John, such that he optimizes his own objective, and because he wants to offer cooperation, he decides the level of cooperation, called $\alpha$ where, $0\leq \alpha<1$. Assuming that John will be more cooperative with Alice; the power cooperation is induced in the optimization problem as follows,

\begin{equation}
\label{7}
P_{ab}= \alpha P_{jb}      
\end{equation}
Solving~\eqref{5}, the cooperative optimal power allocation for the cooperative scenario through relaying which increases secrecy rate for Alice is as follows,

\begin{equation}
\label{8}
\psi_{1} {P_{jb}}^{3}+\psi_{2} {P_{jb}}^{2}+\psi_{3} P_{jb}+\psi_{4}=0
\end{equation}
where, \\
$\psi_{1}=\omega_{1}{\alpha}^{2}G_{ae}$; $\psi_{2}=\omega_{1}\omega_{4}+\omega_{2}{\alpha}^{2}G_{ae}$\\
$\psi_{3}=\omega_{2}\omega_{4}+\omega_{3}{\alpha}^{2}G_{ae}-\alpha G_{aj}G_{jb}G_{ae}P_{a}(G_{aj}P_{a}+{\sigma}^{2})$\\
$\psi_{4}=\omega_{3}\omega_{4}-\omega_{5}G_{aj}G_{jb}P_{a}(G_{aj}P_{a}+{\sigma}^{2})$\\ $\omega_{1}={\sigma}^{2}{G_{jb}}^{2}+{G_{jb}}^{2}G_{ab}P_{a}+{G_{jb}}^{2}G_{aj}P_{a}$\\
$\omega_{2}=({\sigma}^{2}G_{jb}+2G_{jb}G_{ab}P_{a}+G_{aj}G_{jb}P_{a}+{\sigma}^{2}G_{jb})(G_{aj}+{\sigma}^{2})$\\
$\omega_{3}=(G_{ab}P_{a}+{\sigma}^{2})(G_{aj}P_{a}+{\sigma}^{2})^{2}$\\
$\omega_{4}=\alpha{\sigma}^{2}+\alpha G_{ae}(1+\lambda P_{a})$; \ \ $\omega_{5}={\alpha}^{2}+G_{ae}\lambda P_{a}$

\vspace{0.3cm}

$P_{jb}^{*}$ is the solution of the third order equation, which is the optimal power allocation John will decide to cooperate with Alice to relay her data in order to increase her secrecy. 

\vspace{0.3cm}

Solving~\eqref{6}, the cooperative optimal power allocation for the cooperative scenario through relaying which increases secrecy rate for John is as follows,
\begin{equation}
\label{9}
\beta_{1} {P_{ab}}^{3}+\beta_{2} {P_{ab}}^{2}+\beta_{3} P_{ab}+\beta_{4}=0
\end{equation}
where,\\
\small
$\beta_{1}=G_{je}\phi_{1}$; $\beta_{2}=G_{je}\phi_{2}+\phi_{1}\phi_{4}$\\ $\beta_{3}=G_{je}\phi_{3}+\phi_{2}\phi_{4}-\alpha G_{ja}G_{ab}G_{je}P_{j}(G_{aj}P_{j}+{\sigma}^{2})$\\
$\beta_{4}=\phi_{3}\phi_{4}-\phi_{5}$\\
$\phi_{1}={\sigma}^{2}{G_{ab}}^{2}+{G_{ab}}^{2}G_{jb}P_{j}+{G_{ab}}^{2}G_{ja}P_{j}$\\
$\phi_{2}=({\sigma}^{2}G_{ab}+2G_{ab}G_{jb}P_{j}+G_{ab}G_{ja}P_{j}+{\sigma}^{2}G_{ab})(G_{ja}P_{j}+{\sigma}^{2})$\\
$\phi_{3}=(G_{jb}P_{j}+{\sigma}^{2})(G_{ja}P_{j}+{\sigma}^{2})^{2}$; \ \
$\phi_{4}=\alpha{\sigma}^{2}G_{je}\lambda P_{j}+\alpha G_{je}$\\ $\phi_{5}={\alpha}^{2}{\sigma}^{2}G_{ja}G_{ab}G_{je}\lambda P_{j}^{2}(G_{ja}P_{j}+{\sigma}^{2})$\\
\normalsize

$P_{ab}^{*}$ is the solution of the third order equation, which is the optimal power allocation Alice will use to cooperate with John to relay his data in order to increase his secrecy. Therefore, the optimal power allocation will be as provided in the following Theorem.
\begin{theorem}
The optimal power allocation that maximizes the achievable secrecy rate of the cooperative scenario with relaying is the solution of ~\eqref{5} subject to ~\eqref{5_sub} and ~\eqref{6} subject to ~\eqref{6_sub} identified with the optimal set $(P_{ab}^{*},P_{jb}^{*})$ in~\eqref{8} and~\eqref{9}  respectively with ${P_{ab}}^{*}= \alpha P_{jb}$
\end{theorem}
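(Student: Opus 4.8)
The plan is to establish the claim by treating~\eqref{5} and~\eqref{6} as two coupled concave maximizations and showing that their Karush-Kuhn-Tucker (KKT) stationarity conditions reduce, after clearing denominators, to the cubics~\eqref{8} and~\eqref{9}. I would first verify concavity of each objective in its free variable. In ${Cs1}_{\alpha P_{jb}}$ the term carrying $P_{jb}$ is the relay-path ${SNR}_{ajb}$, which by the relay SNR expression has the saturating form $c\,P_{jb}/(d+e\,P_{jb})$ with $c,d,e>0$; this is increasing and concave in $P_{jb}$, so $1+{SNR}_{ab}+{SNR}_{ajb}$ is concave increasing, its logarithm is concave, and the power budget~\eqref{5_sub} defines a convex interval. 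Consequently the KKT conditions are both necessary and sufficient for the global maximizer, which lets me identify the optimum with the stationary point rather than having to compare candidate values.

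Next I would form the Lagrangian $L_1={Cs1}_{\alpha P_{jb}}-\lambda\,(P_j+P_{jb}-P_J)$ and impose stationarity $\partial L_1/\partial P_{jb}=0$ together with primal and dual feasibility and complementary slackness. Substituting ${SNR}_{ab}=G_{ab}P_a/\sigma^2$, ${SNR}_{ae}=G_{ae}P_a/\sigma^2$ and the relay-path ${SNR}_{ajb}$ into $L_1$ and differentiating produces a rational equation in $P_{jb}$: the derivative of $\log(1+{SNR}_{ab}+{SNR}_{ajb})$ is a ratio whose denominator is the product of that log argument with the relay denominator $G_{aj}P_a+G_{jb}P_{jb}+\sigma^2$, both affine in $P_{jb}$, while the constant $-\lambda$ and the eavesdropper rate supply the remaining terms. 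The multiplier $\lambda$ thus enters through the active constraint and the eavesdropper gain $G_{ae}$ through the secrecy penalty carried by the power-sharing coupling $P_a=P_A-\alpha P_{jb}$, which is exactly why both appear in the lower-order coefficients $\omega_4$ and $\omega_5$.

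Multiplying through by the common denominator and collecting powers of $P_{jb}$ then yields a degree-three polynomial whose coefficients I would match against $\psi_1,\dots,\psi_4$ (through the auxiliaries $\omega_1,\dots,\omega_5$), verifying~\eqref{8}. Repeating the identical computation on $L_2={Cs2}_{P_{ab}/\alpha}-\lambda\,(P_a+P_{ab}-P_A)$ with respect to $P_{ab}$, using the relay-path ${SNR}_{jab}$, gives~\eqref{9} with coefficients $\beta_1,\dots,\beta_4$ and $\phi_1,\dots,\phi_5$ by the symmetry $a\leftrightarrow j$. I would then close the argument with the cooperation coupling~\eqref{7}: the optimal relay power is the root of~\eqref{8} lying in $[0,P_J]$, and $P_{ab}^{*}=\alpha P_{jb}^{*}$ forces the matching root of~\eqref{9} into $[0,P_A]$; concavity guarantees this root is the maximizer, with the boundary cases resolved by complementary slackness. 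Together with the active constraints $P_j^{*}=P_J-P_{jb}^{*}$ and $P_a^{*}=P_A-P_{ab}^{*}$, this is precisely the asserted optimal set $(P_{ab}^{*},P_{jb}^{*})$.

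I expect the main obstacle to be the coefficient bookkeeping in the denominator-clearing step: because ${SNR}_{ajb}$ carries $P_{jb}$ in both numerator and denominator, the quotient-rule derivative multiplied by the two affine factors generates many cross terms that must be regrouped exactly into $\psi_1,\dots,\psi_4$, and a single misplaced $\sigma^2$ or sign contaminates every lower-order coefficient. A secondary subtlety is confirming that the resulting cubic actually possesses a feasible real root and that the correct branch is selected, since clearing denominators can introduce spurious roots that do not correspond to the concave maximizer.
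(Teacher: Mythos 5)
Your overall route --- form the Lagrangian, impose KKT stationarity, substitute the SNR expressions, clear denominators, collect powers of the relay power, and couple the two problems through \eqref{7} --- is exactly the derivation the paper itself gestures at before stating the theorem (the paper gives nothing beyond this sketch plus the cubics \eqref{8} and \eqref{9}), so in spirit you are on the paper's path. However, your plan has a structural inconsistency in the treatment of $P_a$, and it is not the ``coefficient bookkeeping'' risk you flag at the end; it decides whether the stationarity condition is even a cubic. Your concavity argument requires $P_a$ to be a fixed parameter, so that the only $P_{jb}$-dependence of ${Cs1}$ sits in ${SNR}_{ajb}$. But under that reading the eavesdropper term $-\log(1+G_{ae}P_a/\sigma^{2})$ is a constant: its derivative vanishes, stationarity reads $G_{aj}G_{jb}P_a(G_{aj}P_a+\sigma^{2})=\lambda\,D'E$, where $D'=G_{aj}P_a+G_{jb}P_{jb}+\sigma^{2}$ and $E=(\sigma^{2}+G_{ab}P_a)D'+G_{aj}G_{jb}P_aP_{jb}$ are both affine in $P_{jb}$, and clearing denominators yields a \emph{quadratic} in which $G_{ae}$ never appears --- not \eqref{8}. (Indeed $D'E=\omega_{1}P_{jb}^{2}+\omega_{2}P_{jb}+\omega_{3}$, which is precisely where $\omega_{1},\omega_{2},\omega_{3}$ come from; the extra degree in \eqref{8} comes entirely from the factors $\alpha^{2}G_{ae}P_{jb}+\omega_{4}$ and $\alpha G_{ae}P_{jb}+\omega_{5}$ multiplying these.)

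To recover the cubic and the $G_{ae}$-dependence you invoke the coupling $P_a=P_A-\alpha P_{jb}$, but once that coupling is in force your two key claims fail as stated: (i) the first logarithm also depends on $P_{jb}$ through $P_a$ (inside ${SNR}_{ab}$ and inside both the numerator and the denominator of ${SNR}_{ajb}$), so the quotient-rule derivative you describe is missing chain-rule terms proportional to $\partial P_a/\partial P_{jb}=-\alpha$, and the resulting polynomial will not reproduce the coefficients $\psi_{1},\dots,\psi_{4}$ in the form you intend to match; and (ii) the term $-\log\bigl(1+G_{ae}(P_A-\alpha P_{jb})/\sigma^{2}\bigr)$ is \emph{convex} in $P_{jb}$, so the objective becomes a difference of concave and concave (equivalently concave plus convex) and your assertion that the KKT conditions are sufficient --- the one genuine strengthening you add over the paper --- no longer follows. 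To close the proof you must commit to a single consistent treatment of $P_a$, carry the corresponding chain-rule terms through the stationarity equation, and then justify global optimality of the selected root by some argument other than plain concavity, e.g.\ a sign analysis of the derivative over the feasible interval $[0,P_J]$ or direct comparison with the boundary values allowed by complementary slackness.
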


%

\section{COOPERATIVE MAC}
\label{COOPERATIVE MAC}
%

\subsection{Cooperative Scenario without Relaying}
%

This scenario will deal with the cooperative model in Figure \ref{fig:figure-1} without the need to relay anyone's data, so both will cooperate in their own transmissions power to maximize the secrecy rate of the other. Note that the achievable secrecy rate region is still the same, however, this is a special case of the one in the previous scenario, where the SNR that contributes to the extra rates through relayed data will disappear from the equation.

\vspace{0.3cm}

The achievable secrecy rates are defined as follows,
\begin{equation}
\label{10}
{Cs1}_{\alpha P_{j}}=\max_{\alpha P_{j}} \ log(1+{SNR}_{ab})-log(1+{SNR}_{ae}) 
\end{equation}
Subject to, 
\begin{equation}
\label{sub_10}
P_{j}\leq {P_{J}}
\end{equation}
\begin{equation}
\label{11}
{Cs2}_{\frac{P_{a}}{\alpha}}=\max_{\frac{P_{a}}{\alpha}} \ log(1+{SNR}_{jb})-log(1+S{NR}_{je}) 
\end{equation}
Subject to, 
\begin{equation}
\label{sub_11}
P_{a}\leq {P_{A}} 
\end{equation}

\vspace{0.3cm}

Solving~\eqref{10}, the cooperative optimal $P_{j}^{*}$ is the solution of the quadratic equation,
\footnotesize
\begin{equation}
\lambda \alpha^{2} G_{ab} G_{ae}P_{j}^{2}+\lambda(\alpha \sigma^{2} G_{ab}+ \alpha \sigma^{2} G_{ae})P_{j}-(\sigma^{2} G_{ab}-\sigma^{2} G_{ae}-\lambda \sigma^{4})=0 
\end{equation}
\normalsize

Solving~\eqref{11}, the cooperative optimal $P_{a}^{*}$ is the solution of the quadratic equation,
\small
\begin{equation}
\frac{\lambda G_{je} G_{jb}}{\alpha^{2}}P_{a}^{2}+\lambda (\frac{\sigma^{2}G_{jb}}{\alpha}+ \frac{\sigma^{2}G_{je}}{\alpha})P_{a}-(\sigma^{2}G_{jb}-\sigma^{2}G_{je}-\lambda \sigma^{4})=0 
\end{equation}     
\normalsize
Therefore, the optimal power allocation will be as provided in the following Theorem.

\vspace{0.3cm}

\begin{theorem}
The optimal power allocation that maximizes the achievable secrecy rates of the cooperative scenario without relaying is the solution of ~\eqref{10} subject to ~\eqref{sub_10} and ~\eqref{11} subject to ~\eqref{sub_11} identified with the optimal set $(P_{a}^{*},P_{j}^{*})$ given by the following closed forms,

\scriptsize 
\begin{eqnarray}
\label{40}
{P_{a}}^{*} = & \frac{\alpha}{2} \sqrt{\frac{{\lambda}^{2}{\sigma}^{4}(G_{jb}+G_{je})^{2}+4\lambda G_{jb}G_{je}({\sigma}^{2}G_{jb} 
-{\sigma}^{2}G_{je}-\lambda {\sigma}^{4})}{(\lambda G_{jb}G_{je})^{2}}}\nonumber \\
&-{\sigma}^{2}(\frac{1}{G_{je}}+\frac{1}{G_{jb}})&
\end{eqnarray}
\begin{eqnarray}
\label{41}
{P_{j}}^{*} = & \frac{1}{2\lambda\alpha} \sqrt{\frac{({\sigma}^{2}G_{ab}+{\sigma}^{2}G_{ae})^{2}+4\lambda G_{ab}G_{ae}({\sigma}^{2}G_{ab}-{\sigma}^{2}G_{ae}-\lambda {\sigma}^{4})}{(G_{ab}G_{ae})^{2}}}\nonumber  \\
&-{\sigma}^{2}(\frac{1}{G_{ae}}+\frac{1}{G_{ab}})&
\end{eqnarray}
\normalsize
\end{theorem}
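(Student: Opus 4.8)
The plan is to read Theorem 3 as two decoupled single-variable maximizations and to recover the closed forms \eqref{40}--\eqref{41} by writing the KKT stationarity condition for each objective and clearing denominators to obtain a quadratic in the relevant power. The two problems decouple because, once relaying is switched off, the secrecy rate in \eqref{10} depends only on Alice's transmit power (through $SNR_{ab}$ and $SNR_{ae}$) and the secrecy rate in \eqref{11} depends only on John's, the sole interaction being the fixed cooperation ratio $\alpha$ that links the two powers (the substitutions $P_a=\alpha P_j$ used in $Cs1$ and $P_j=P_a/\alpha$ used in $Cs2$). First I would insert $SNR_{ik}=G_{ik}P_i/\sigma^2$ together with this ratio into each objective, so that \eqref{10} reads $\log(1+\alpha G_{ab}P_j/\sigma^2)-\log(1+\alpha G_{ae}P_j/\sigma^2)$ and \eqref{11} takes the mirror form in $P_a$ with gains $G_{jb},G_{je}$ and the factor $1/\alpha$.

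Next, for the Alice problem I would form the Lagrangian $\mathcal{L}=\log(1+SNR_{ab})-\log(1+SNR_{ae})-\lambda P_j$ attached to the budget \eqref{sub_10} and impose $\partial\mathcal{L}/\partial P_j=0$. This derivative is a difference of two rational terms of the shape $\frac{a}{\sigma^2+aP_j}-\frac{b}{\sigma^2+bP_j}-\lambda$; multiplying through by the common denominator $(\sigma^2+aP_j)(\sigma^2+bP_j)$ removes the fractions and collapses the condition to exactly the quadratic in $P_j$ displayed just before the theorem. Its positive root, read off from the quadratic formula, is \eqref{41}; running the identical argument on \eqref{11} subject to \eqref{sub_11} yields the quadratic in $P_a$ and hence \eqref{40}. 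This part is routine algebra, the only bookkeeping being to carry the $\alpha$ factor through the coefficients and to discard the negative, infeasible root.

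The step I expect to be the genuine obstacle is certifying that these stationary points are maximizers rather than spurious critical points, since each objective is a \emph{difference} of two concave logarithms and so is not concave by inspection. I would settle this by computing the second derivative, $-\frac{a^2}{(\sigma^2+aP)^2}+\frac{b^2}{(\sigma^2+bP)^2}$, and showing it is strictly negative exactly when the legitimate gain dominates the eavesdropper gain, i.e. $G_{ab}>G_{ae}$ for Alice and $G_{jb}>G_{je}$ for John --- which is precisely the regime in which the secrecy rate is positive to begin with. On that domain the Lagrangian is strictly concave, so the KKT point is the unique global optimum and the stationarity-plus-feasibility argument closes. I would finish by recording that $\lambda$ is either pinned down by enforcing the power budget with equality or read as a power price penalising transmit energy, and by checking that the selected root lies in $[0,P_J]$ (respectively $[0,P_A]$), so that \eqref{40}--\eqref{41} are the admissible optimal allocations asserted by the theorem.
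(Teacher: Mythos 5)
Your proposal follows essentially the same route as the paper: substitute the $\alpha$-coupled powers into each decoupled objective, write the Lagrangian/KKT stationarity condition, clear denominators to obtain the quadratic in $P_{j}$ (resp.\ $P_{a}$) displayed just before the theorem, and take the positive root of the quadratic formula to get \eqref{41} (resp.\ \eqref{40}). Your added second-derivative check that each objective is strictly concave precisely when $G_{ab}>G_{ae}$ (resp.\ $G_{jb}>G_{je}$), i.e.\ exactly in the regime of positive secrecy rate, is extra rigor the paper leaves implicit, but it does not change the underlying argument.
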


Notice that the optimal cooperation level can be derived finding out $\frac{\partial P_{j}}{\partial \alpha}$  or $\frac{\partial P_{a}}{\partial \alpha}$. On the other hand, we can derive also the SNR over each link at which the cooperation is optimal.
%

\subsection{Cooperation from one side}
%

This scenario will consider that John helps Alice, while Alice does not help John. So, this scenario considers a cooperation at which Alice and John are concerned to maximize her own secrecy rate, since Eve is targeting Alice only. However, no relaying is considered here.

\vspace{0.3cm}

This scenario is a special case of Theorem 1, where the upper bound for the achievable secrecy rate will be as mentioned in the following Theorem.

\vspace{0.3cm}

\begin{theorem}
(Achievabilty): The achievable secure rate regions for the model in Figure 1 with cooperation from one side and without any relaying is upper bounded by the secure achievable rates for a MAC channel corresponding to each transmitter and their common receiver, and the channel between Alice and the eavesdropper. The rates in the closure of the union of all $(R_{1}, R_{2})$ satisfy,
\begin{equation}
R1\leq I(x_{a};Y_{b}|x_{j})-I(x_{a};Y_{e}).
\end{equation}
\begin{equation}
R2\leq I(x_{j};Y_{b}|x_{a}).
\end{equation}
\begin{equation}
R1+R2\leq I(x_{a},x_{j};Y_{b})-I(x_{a};Y_{e}).
\end{equation}
\end{theorem}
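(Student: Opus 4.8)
The plan is to obtain this region as a specialization of the general secure MAC region of Theorem 1, adapting it to the two defining features of this scenario: only Alice's message must be kept secret from Eve, and, since no relaying takes place, Eve's observation is a function of Alice's input alone. First I would recall the auxiliary-variable construction used in the proof of Theorem 1, with independent $U,V$, product distribution $p(u)p(x_a|u)p(v)p(x_j|v)$, time-sharing variable $Q$, and the Markov chain $U\rightarrow V\rightarrow (X_a,X_j)\rightarrow (Y_b,Y_e)$. I would then re-run the wiretap-MAC achievability argument, but impose the equivocation (secrecy) constraint on message $1$ (Alice) only, leaving message $2$ (John) unconstrained.

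The computational heart is to simplify the three eavesdropper terms using the channel model $y_{ae}=G_{ae}\sqrt{P_a}\,x_a+\bar{n}_1$, in which $Y_e$ depends on $x_a$ and the noise $\bar{n}_1$ only. Because $x_a$ and $x_j$ are independent under the product distribution and $\bar{n}_1$ is independent of $x_j$, we have the Markov chain $x_j\rightarrow x_a\rightarrow Y_e$; hence $H(Y_e\mid x_j)=H(Y_e)$ and $H(Y_e\mid x_a,x_j)=H(Y_e\mid x_a)$, so that $I(x_a;Y_e\mid x_j)=H(Y_e)-H(Y_e\mid x_a)=I(x_a;Y_e)$. This converts the individual bound $R1\le I(x_a;Y_b\mid x_j)-I(x_a;Y_e\mid x_j)$ of Theorem 1 into $R1\le I(x_a;Y_b\mid x_j)-I(x_a;Y_e)$. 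The same observation, namely that $Y_e$ is a deterministic function of $(x_a,\bar{n}_1)$, gives $I(x_a,x_j;Y_e)=I(x_a;Y_e)$, which turns the sum-rate penalty $I(x_a,x_j;Y_e)$ of Theorem 1 into $I(x_a;Y_e)$ and yields $R1+R2\le I(x_a,x_j;Y_b)-I(x_a;Y_e)$.

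For the second bound I would argue that, because Eve does not target John, no equivocation requirement is placed on his message; formally this amounts to setting the secrecy rate on message $2$ to zero, so that the penalty term $I(x_j;Y_e\mid x_a)$ present in Theorem 1 is simply absent. What remains is the ordinary (non-secure) MAC achievable rate of John's message at Bob, conditioned on $x_a$, giving $R2\le I(x_j;Y_b\mid x_a)$. Collecting the three simplified inequalities and taking the closure of the union over all admissible product input distributions (and over the time-sharing variable $Q$, absorbed as in Theorem 1) produces exactly the stated region.

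I expect the main obstacle to be the clean justification that the one-sided secrecy constraint leaves the standard wiretap-MAC coding scheme intact: one must check that applying stochastic (binning) encoding only to Alice's codebook, while John uses an ordinary channel code, still guarantees both reliable decoding at Bob and the required equivocation $H(W_a\mid Y_e^{n})/n$ for Alice's message, without the conditioning on $x_j$ that was available in Theorem 1. Once the independence $x_j\rightarrow x_a\rightarrow Y_e$ is in hand this reduces to a routine adaptation of the secure-MAC analysis, but it is the step that genuinely uses the ``no relaying / Eve overhears Alice only'' modeling assumption, so it is where care is needed.
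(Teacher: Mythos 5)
Your proposal is correct and follows essentially the same route as the paper: the paper likewise specializes the Theorem~1 wiretap-MAC achievability construction (auxiliary variables, product input distribution, time-sharing variable, and Markovity) to the case where Eve's observation factors as $p(y_{ae}|x_{a})$, so the secrecy penalty involves only $x_{a}$ and John's bound carries no penalty term. Your version merely makes explicit the mutual-information simplifications $I(x_{a};Y_{e}|x_{j})=I(x_{a};Y_{e})$ and $I(x_{a},x_{j};Y_{e})=I(x_{a};Y_{e})$ via the chain $x_{j}\rightarrow x_{a}\rightarrow Y_{e}$, and the one-sided binning caveat, which the paper leaves implicit.
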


\begin{proof}
We can re-write the rate regions considering some auxiliary random variables $\acute{U}$, $\acute{V}$, and a time sharing variable $\acute{Q}$.
\begin{equation}
R1\leq I(x_{a};Y_{b}|x_{j},\acute{Q})-I(x_{a};Y_{e}|\acute{Q}). 
\end{equation}
\begin{equation}
R2\leq I(\acute{V},x_{j};Y_{b}|\acute{U},x_{a},\acute{Q}). 
\end{equation}
\begin{equation}
R1+R2\leq I(x_{a},x_{j},\acute{U},\acute{V};Y_{b}|\acute{Q})-I(x_{a},\acute{U};Y_{e}|\acute{Q}). 
\end{equation}

For some product distribution $\acute{U}\rightarrow \acute{V}\rightarrow (X_{a}, X_{j})\rightarrow y_{ab}, y_{jb}\rightarrow Y_{b}$, $y_{ae}\rightarrow Y_{e}$. The joint probability distributions $p(\acute{u},\acute{v},x_{a},x_{j},y_{ab},y_{jb})$ factors as, $p(\acute{u})p_(x_{a}|\acute{u})p(\acute{v})p(x_{j}|\acute{v})p(y_{ab}|x_{a})p(y_{jb}|x_{j})$ and the joint probability distribution $p(\acute{u},x_{a},y_{ae})$ factors as, $p(\acute{u}p_{xa}|\acute{u})p(y_{ae}|x_{a})$; and the time sharing variable $\acute{Q}$ and through Markovity, $\acute{U}\rightarrow \acute{V}\rightarrow (X_{a},X_{j})\rightarrow (Y_{b},Y_{e})$, the claim is proved.
\end{proof}

We define the optimization problem for the scenario of cooperation from one side as follows,
\begin{equation}
\label{48}
{Cs1}_{P_{a}}=\max_{P_{a}} \ log(1+{SNR}_{ab})-log(1+{SNR}_{ae})   
\end{equation}
Subject to, 
\begin{equation}
\label{sub_48}
P_{a}\leq {P_{A}}   
\end{equation}
\begin{equation}
\label{50}
{Cs2}_{\alpha P_{j}}=\max_{\alpha P_{j}} \ log(1+{SNR}_{ab})-log(1+{SNR}_{ae})
\end{equation}
Subject to, 
\begin{equation}
\label{sub_50}
P_{j}\leq {P_{J}}
\end{equation}

Such scenario will be a mixed scenario from the previous scenario and the one in the next section. The optimal non-cooperative $P_{a}^{*}$ is the solution of the quadratic equation,
\small
\begin{equation}
\lambda G_{ab} G_{ae}P_{a}^{2}+ \lambda(\sigma^{2} G_{ab}+\sigma^{2} G_{ae})P_{a}-(\sigma^{2} G_{ab}-\sigma^{2} G_{ae}-\lambda \sigma^{4})=0
\end{equation}
\normalsize

The optimal cooperative $P_{j}^{*}$ is the solution of the quadratic equation,
\footnotesize 
\begin{equation}
\lambda\alpha^{2}G_{ab} G_{ae}P_{j}^{2}+\lambda(\alpha \sigma^{2} G_{ab}+\alpha \sigma^{2}G_{ae})P_{j}-(\sigma^{2} G_{ab}- \sigma^{2}G_{ae}-\lambda \sigma^{4} )=0 
\end{equation}
\normalsize
Therefore, the optimal power allocation will be as provided in the following Theorem.

\vspace{0.3cm}

\begin{theorem}
The optimal power allocation that maximizes the achievable secrecy rates of the cooperative scenario from one side is the solution of ~\eqref{48} subject to ~\eqref{sub_48} and ~\eqref{50} subject to ~\eqref{sub_50} identified with the optimal set $(P_{a}^{*},P_{j}^{*})$ given by the following closed forms,

\scriptsize 
\begin{eqnarray}
\label{54}
{P_{a}}^{*} = &\frac{1}{2}\sqrt{\frac{{\lambda}^{2}{\sigma}^{4}(G_{ab}+G_{ae})^{2}+4\lambda G_{ab}G_{ae}({\sigma}^{2}G_{ab}-{\sigma}^{2}G_{ae}-\lambda {\sigma}^{4})}{(\lambda G_{ab}G_{ae})^{2}}} \nonumber \\
&-{\sigma}^{2}(\frac{1}{G_{ae}}+\frac{1}{G_{ab}})&
\end{eqnarray}

\begin{eqnarray}
\label{55}
{P_{j}}^{*} = & \frac{1}{2\lambda\alpha} \sqrt{\frac{({\sigma}^{2}G_{ab}+{\sigma}^{2}G_{ae})^{2}+4\lambda G_{ab}G_{ae}({\sigma}^{2}G_{ab}-{\sigma}^{2}G_{ae}-\lambda {\sigma}^{4})}{(G_{ab}G_{ae})^{2}}}\nonumber  \\
&-{\sigma}^{2}(\frac{1}{G_{ae}}+\frac{1}{G_{ab}})&
\end{eqnarray}
\normalsize

\end{theorem}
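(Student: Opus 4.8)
The plan is to treat the two sub-problems \eqref{48}--\eqref{sub_48} and \eqref{50}--\eqref{sub_50} as scalar concave maximizations in a single power variable and to extract their stationary points through the KKT conditions, exactly as announced for the cooperative relays section. The first observation I would record is that, because Eve targets only Alice in this one-sided scenario, both objectives carry the \emph{same} secrecy functional $\log(1+{SNR}_{ab})-\log(1+{SNR}_{ae})$, where ${SNR}_{ab}=G_{ab}P_a/\sigma^2$ and ${SNR}_{ae}=G_{ae}P_a/\sigma^2$. The only structural difference is the control variable: Alice tunes $P_a$ directly in \eqref{48}, whereas John's help enters through the cooperation-scaled power $\alpha P_j$ in \eqref{50}.

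First I would form the Lagrangian for \eqref{48},
\begin{equation}
L(P_a,\lambda)=\log\!\left(1+\frac{G_{ab}P_a}{\sigma^2}\right)-\log\!\left(1+\frac{G_{ae}P_a}{\sigma^2}\right)-\lambda P_a,
\end{equation}
with $\lambda$ the multiplier of the budget $P_a\le P_A$, and impose stationarity $\partial L/\partial P_a=0$. The key algebraic simplification is that the two marginal-rate terms combine over the common denominator $(\sigma^2+G_{ab}P_a)(\sigma^2+G_{ae}P_a)$ and their numerator collapses to the $P_a$-independent quantity $\sigma^2(G_{ab}-G_{ae})$, because the cross-term $G_{ab}G_{ae}P_a$ cancels. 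Setting this quotient equal to $\lambda$ and clearing the denominator converts the first-order condition into
\begin{equation}
\lambda G_{ab}G_{ae}P_a^{2}+\lambda\sigma^2(G_{ab}+G_{ae})P_a-\bigl(\sigma^2 G_{ab}-\sigma^2 G_{ae}-\lambda\sigma^4\bigr)=0,
\end{equation}
which is precisely the quadratic stated just before \eqref{54}. Applying the quadratic formula and retaining the positive root then delivers the closed form \eqref{54}.

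For the companion problem \eqref{50} I would repeat the differentiation with respect to John's variable. Since the functional is identical and the free quantity is $\alpha P_j$, the substitution $P_a\mapsto\alpha P_j$ transports the stationarity condition directly into the $\alpha$-scaled quadratic $\lambda\alpha^2 G_{ab}G_{ae}P_j^{2}+\lambda\alpha\sigma^2(G_{ab}+G_{ae})P_j-(\sigma^2 G_{ab}-\sigma^2 G_{ae}-\lambda\sigma^4)=0$. Factoring $\alpha$ out of the discriminant and again keeping the feasible root produces \eqref{55}, and the identity $P_a^{*}=\alpha P_j^{*}$ ties the two responses together through the cooperation level, consistent with the bookkeeping used in the cooperative-without-relaying case.

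The step I expect to be the main obstacle is the justification of the root selection and of the interior stationary point itself. The secrecy functional is concave in $P_a$ whenever $G_{ab}>G_{ae}$ (its second derivative is strictly negative there), so KKT is sufficient; but that very same condition makes it strictly increasing, so one must argue that $\lambda>0$ operates as the active power-budget price that pulls the maximizer to the stationary value rather than to the raw boundary $P_a=P_A$, and that the discriminant is nonnegative so the radicals in \eqref{54}--\eqref{55} are real. Verifying that the chosen root is the unique nonnegative feasible one, while the conjugate root is negative and hence discarded, is the delicate part; the remaining manipulations are routine.
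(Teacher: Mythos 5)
Your proposal is correct and follows essentially the same route as the paper: the paper likewise reduces \eqref{48} and \eqref{50} via the Lagrangian/KKT conditions to exactly the two quadratics you derive (it states them just before the theorem, without showing the cancellation of the cross-terms to $\sigma^{2}(G_{ab}-G_{ae})$ that you make explicit), and then reads off the closed forms as their positive roots. One remark worth recording: a literal application of the quadratic formula attaches the factor $\tfrac{1}{2}$ (and, for~\eqref{55}, also $\tfrac{1}{\alpha}$) to the subtracted term $\sigma^{2}(\tfrac{1}{G_{ae}}+\tfrac{1}{G_{ab}})$ as well, and keeps $\lambda^{2}\sigma^{4}(G_{ab}+G_{ae})^{2}$ rather than $(\sigma^{2}G_{ab}+\sigma^{2}G_{ae})^{2}$ inside the radical of~\eqref{55}, so the printed closed forms carry minor typographical slips that your derivation --- which reproduces the paper's own quadratics exactly --- would silently correct.
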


\vspace{0.03cm}

\section{NO-COOPERATION}
\label{NO-COOPERATION}

In this scenario, no cooperation exists, thus every device wants to maximize its own utility with its own resources. The reasoning behind considering this scenario is to study the implications of cooperation in the solution; i.e., to provide analytical insight when the cooperation is of benefit. 

\vspace{0.3cm}

We can first consider that the achievable secrecy rate regions for the model in Figure 1 with two non-cooperative devices and one eavesdropper coincide with that of the MAC with one common eavesdropper. Therefore, we can conclude the following Theorem. 

\vspace{0.3cm}

\begin{theorem}
\label{noncooperative}
(Achievability): The achievable secure rate regions for the model in Figure 1 is upper bounded by the "secure" achievable rates for a MAC channel corresponding to each non-cooperative transmitter and their common receiver, and the two legitimate non-cooperative transmitters and their common eavesdropper. The rates in the closure of the union of all $(R1,R2)$ satisfy,

\begin{equation}
R1\leq I(x_{a};Y_{b}|x_{j})-I(x_{a};Y_{e}|x_{j}). 
\end{equation}
\begin{equation}
R2\leq I(x_{j};Y_{b}|x_{a})-I(x_{j};Y_{e}|x_{a}). 
\end{equation}
\begin{equation}
R1+R2\leq I(x_{a},x_{j};Y_{b})-I(x_{a},x_{j};Y_{e}). 
\end{equation}

\end{theorem}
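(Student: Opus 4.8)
The plan is to exploit the fact that the three inequalities asserted here are \emph{identical} to those of Theorem 1; the only genuinely new ingredient is the interpretation of ``no cooperation,'' which I would formalize as the requirement that Alice and John draw their codewords from statistically independent codebooks with no shared randomness. Under this reading the setup collapses to a standard two-user multiple-access wiretap channel (MAC-WT): with relaying and bi-directional help switched off, the two transmitters send over independent codebooks to the common receiver Bob, while Eve observes the pair through the overhearing links $y_{ae},y_{je}$. Consequently my strategy is to reproduce the auxiliary-variable argument of Theorem 1, invoking the achievability schemes behind the cited MAC-WT results~\cite{RMYS06},~\cite{Gamal2006},~\cite{gerhard11}, and to verify that the independence constraint leaves the region unchanged.

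First I would install independent stochastic encoders. Mirroring the proof of Theorem 1, I would introduce auxiliary random variables $U$, $V$ and a time-sharing variable $Q$, but now force the joint law to factor as $p(q)p(u)p(x_a|u)p(v)p(x_j|v)$ with $U,V$ independent, which is exactly what ``no cooperation'' means at the code level. Each transmitter employs a double-indexed codebook, carrying a message index together with a randomization (dummy) index, the latter implementing the wiretap binning that masks the secret message from Eve.

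Second, for reliability I would run ordinary jointly-typical MAC decoding at Bob; Fano's inequality then yields the non-secure MAC constraints $R_1\le I(x_a;Y_b|x_j,Q)$, $R_2\le I(x_j;Y_b|x_a,Q)$, and $R_1+R_2\le I(x_a,x_j;Y_b|Q)$. For secrecy I would size the randomization rates to match $I(x_a;Y_e|x_j,Q)$, $I(x_j;Y_e|x_a,Q)$, and jointly $I(x_a,x_j;Y_e|Q)$, so that the standard equivocation-rate lemma forces the normalized leakage about the secret messages to vanish. Subtracting these leakage rates from the reliability constraints produces precisely the three claimed differences of mutual informations, after which single-letterization absorbs $Q$ and the closure over admissible product distributions lets it be dropped, completing the region as in Theorem 1.

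The step I expect to be the main obstacle is the \emph{joint} equivocation analysis underlying the sum-rate bound, as opposed to the two per-user bounds. Controlling $I(x_a,x_j;Y_e)$ demands that the two separately-binned codebooks jointly conceal the message pair, and this must be established \emph{without} any shared randomness between Alice and John; the argument leans on the Markov chain $U\rightarrow V\rightarrow(X_a,X_j)\rightarrow(Y_b,Y_e)$ together with the memoryless, conditionally independent structure of the overhearing links. Verifying that the union-of-events and joint-typicality estimates still close under this independence constraint is the delicate part, and it is where the non-cooperative interpretation most visibly differs from the cooperative Theorem 1, even though the resulting rate region coincides.
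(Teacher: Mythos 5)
Your proposal is correct and follows essentially the same route as the paper: the paper's entire proof of this theorem is the single remark that it ``follows the steps in the proof of Theorem 1,'' and your argument is exactly a reproduction of Theorem 1's auxiliary-variable construction ($U$, $V$, time-sharing $Q$) with the product distribution $p(u)p(x_a|u)p(v)p(x_j|v)$ --- which, as you correctly note, already encodes the independent-codebook (non-cooperative) constraint, so the region is unchanged. Your write-up is in fact considerably more detailed than the paper's (explicit binning, Fano, and equivocation analysis, which the paper leaves implicit), but it is the same approach, not a different one.
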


\vspace{0.3cm}

\begin{proof}
The proof of this Theorem follows the steps in the the proof of the Theorem 1.
\end{proof}

It is of particular relevance to notice that the achievable secrecy rates in Theorem 6 are only upper bounds to the rates when the data are decoded from each transmitter. In fact, the scenario of common receiver and common eavesdropper will provide the chance for one legitimate transmitter or the other to transmit securely and reliably or to leak data unevenly. To clarify this point, we can provide the following example. Let the achievable secrecy rate for Alice is as follows,

\begin{equation}
R1\leq I(x_{a};Y_{b}|x_{j})-I(x_{a};Y_{e}). 
\end{equation}
Then, based on the MAC channel rate regions defined by common legitimate and illegitimate receivers. The achievable rate for John will be as follows,
\begin{equation}
R2\leq I(x_{j};Y_{b})-I(x_{j};Y_{e}|x_{a}). 
\end{equation}
Based on the chain rule of the mutual information defined as,
\begin{equation}
I(x_{i},x_{j};Y_{j})=I(x_{i};Y_{j})+I(x_{j};Y_{j}|x_{i}). 
\end{equation}
The sum of the rate regions will be given as defined in the Theorem, i.e.,
\begin{equation}
R1+R2\leq I(x_{a},x_{j};Y_{b})-I(x_{a},x_{j};Y_{e}). 
\end{equation}

We define the optimization problem for the scenario of no cooperation as follows,
\begin{equation}
\label{14}
{Cs1}_{P_{a}}=\max_{P_{a}} \ log(1+{SNR}_{ab})-log(1+{SNR}_{ae})
\end{equation}
Subject to, 
\begin{equation}
\label{sub_14}
P_{a}\leq {P_{A}}
\end{equation}
\begin{equation}
\label{15}
{Cs2}_{P_{j}}=\max_{P_{j}} \ log(1+{SNR}_{jb})-log(1+{SNR}_{je})
\end{equation}
Subject to, 
\begin{equation}
\label{sub_15}
P_{j}\leq {P_{J}}
\end{equation}

\vspace{0.3cm}

Solving~\eqref{14}, the optimal non-cooperative $P_{a}^{*}$ is the solution of the quadratic equation,
\begin{equation}
\small
\lambda G_{ab} G_{ae}P_{a}^{2}+ \lambda(\sigma^{2} G_{ab}+\sigma^{2} G_{ae}) P_{a}-(\sigma^{2}G_{ab}-\sigma^{2}G_{ae}-\lambda \sigma^{4})=0
\end{equation}
\normalsize

Solving~\eqref{15}, the optimal non-cooperative $P_{j}^{*}$ is the solution of the quadratic equation,
\begin{equation}
\small
\lambda G_{jb}G_{je}P_{j}^{2}+ \lambda(\sigma^{2} G_{ab}+\sigma^{2}G_{je})P_{j}-(\sigma^{2}G_{jb}-\sigma^{2}G_{je}-\lambda \sigma^{4})=0
\end{equation}
\normalsize
%

Therefore, the optimal power allocation will be as provided in the following Theorem.

\begin{theorem}
The optimal power allocation that maximizes the achievable secrecy rates for the non-cooperative scenario is the solution of ~\eqref{14} subject to ~\eqref{sub_14} and ~\eqref{15} subject to ~\eqref{sub_15} identified with the optimal set $(P_{a}^{*},P_{j}^{*})$ given by the following closed forms,

\scriptsize 
\begin{eqnarray}
\label{69}
{P_{a}}^{*} = &\frac{1}{2}\sqrt{\frac{{\lambda}^{2}{\sigma}^{4}(G_{ab}+G_{ae})^{2}+4\lambda G_{ab}G_{ae}({\sigma}^{2}G_{ab}-{\sigma}^{2}G_{ae}-\lambda {\sigma}^{4})}{(\lambda G_{ab}G_{ae})^{2}}} \nonumber \\
&-{\sigma}^{2}(\frac{1}{G_{ae}}+\frac{1}{G_{ab}})&
\end{eqnarray}
\begin{eqnarray}
\label{70}
{P_{j}}^{*}= & \frac{1}{2}\sqrt{\frac{{\lambda}^{2}{\sigma}^{4}(G_{jb}+G_{je})^{2}+4\lambda G_{jb}G_{je}({\sigma}^{2}G_{jb}-{\sigma}^{2}G_{je}-\lambda {\sigma}^{4})}{(\lambda G_{jb}G_{je})^{2}}} \nonumber \\
&-{\sigma}^{2}(\frac{1}{G_{je}}+\frac{1}{G_{jb}})&
\end{eqnarray}
\normalsize
\end{theorem}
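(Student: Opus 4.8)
The plan is to exploit the structural feature that distinguishes this scenario from the cooperative ones: in the absence of cooperation the joint problem \emph{decouples}. Alice's objective $Cs1$ in~\eqref{14} depends only on $P_a$ and John's objective $Cs2$ in~\eqref{15} depends only on $P_j$, so the simultaneous maximization splits into two independent scalar problems solved by the same recipe. First I would substitute $SNR_{ik}=G_{ik}P_i/\sigma^2$ into each secrecy rate, writing Alice's utility as $f(P_a)=\log\!\big(1+G_{ab}P_a/\sigma^2\big)-\log\!\big(1+G_{ae}P_a/\sigma^2\big)$, and then form the Lagrangian $L=f(P_a)-\lambda\,(P_a-P_A)$ for the constraint~\eqref{sub_14}, with $\lambda\ge 0$ the KKT multiplier, exactly as prescribed in the paragraph preceding the theorem.

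Next I would impose the stationarity condition $\partial L/\partial P_a=0$. The derivative of the two logarithmic terms is $\tfrac{G_{ab}}{\sigma^2+G_{ab}P_a}-\tfrac{G_{ae}}{\sigma^2+G_{ae}P_a}$, and the decisive simplification is that placing these over the common denominator $(\sigma^2+G_{ab}P_a)(\sigma^2+G_{ae}P_a)$ cancels the $P_a$-linear cross terms in the numerator, collapsing it to the constant $\sigma^2(G_{ab}-G_{ae})$. Setting the result equal to $\lambda$ and clearing the denominator yields precisely the quadratic $\lambda G_{ab}G_{ae}P_a^2+\lambda\sigma^2(G_{ab}+G_{ae})P_a-(\sigma^2 G_{ab}-\sigma^2 G_{ae}-\lambda\sigma^4)=0$ stated before the theorem. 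The quadratic formula, with the $+$ branch of the discriminant retained as the only candidate for a nonnegative power, then delivers the closed form~\eqref{69}; the computation for John is identical under $G_{ab}\mapsto G_{jb}$, $G_{ae}\mapsto G_{je}$, $P_a\mapsto P_j$, producing~\eqref{70}, so the argument need only be written once.

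The step I expect to be the main obstacle is certifying that this stationary point is the true constrained \emph{maximizer} and is feasible, rather than merely a critical point. A second-derivative check shows $f''(P_a)<0$ exactly when $G_{ab}>G_{ae}$, so $f$ is concave and strictly increasing in the secrecy-positive regime; concavity guarantees the root of the quadratic is the global maximizer of the Lagrangian, but monotonicity means the interior solution exists only through the multiplier, so $\lambda$ must be interpreted as the Lagrange price of power determined by the constraint, not as a free constant. I would therefore verify that $G_{ab}>G_{ae}$ together with $0\le\lambda\le(G_{ab}-G_{ae})/\sigma^2$ keeps the retained root nonnegative, and close by projecting the solution onto $[0,P_A]$ so that~\eqref{sub_14} (and symmetrically~\eqref{sub_15}) is honored. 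The remaining work---reconciling the quadratic-formula coefficients with the factored radical displayed in~\eqref{69} and~\eqref{70}---is routine bookkeeping.
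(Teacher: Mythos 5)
Your proposal is correct and follows essentially the same route as the paper: a per-device Lagrangian with KKT stationarity, the collapse of the derivative's numerator to $\sigma^{2}(G_{ab}-G_{ae})$ over the common denominator, the identical quadratic in $P_a$ (and, by the symmetric substitution, in $P_j$), and the positive root of the quadratic formula yielding the closed forms. Your closing point about monotonicity---that when $G_{ab}>G_{ae}$ the objective is strictly increasing, so the interior stationary point exists only through the multiplier and the constrained optimum sits at the power budget---is precisely the remark the paper itself makes right after the theorem, namely that $(P_a^{*},P_j^{*})=(P_A,P_J)$ under the per-device total power constraint.
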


Notice that the solution set of this scenario is a special case of the solution set of the previous scenario when the cooperation level $\alpha=1$, as well as the constant $\lambda=1$. In fact, in a non-cooperative scenario, with per device total power constraint, it can be easily shown through the Lagrangian and the KKT conditions that the optimal strategy for each device is to allocate their own total maximum power, i.e., $(P_a^*,P_j^*)=(P_A,P_J)$.
%

\section{COOPERATIVE DISTANCE}
\label{COOPERATIVE DISTANCE}
%

Cooperation may be not beneficial for both parties, so no-cooperation will be one response from one or both devices if the cooperation will adversely affect its secrecy. Hence, it follows the importance of the distance considerations between cooperating parties. Thus, we will next consider the distance between Alice and John so that cooperation beneficially exists; otherwise John will adversely affect Alice. 
\begin{figure}[ht!]
    \begin{center}
      \mbox{\includegraphics[width=3.4in]{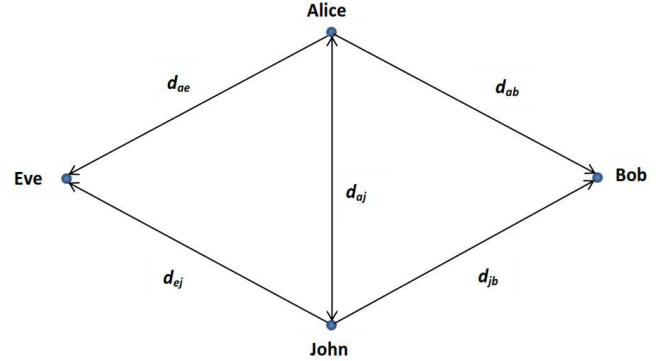}}
          \caption{System model illustrating fixed distances between the devices in the wireless cloud}
    \label{fig:figure-2}
    \end{center}
    \label{figure-2}
\end{figure}

\vspace{0.3cm}

Consider the model in Figure \ref{fig:figure-2}. The distances between different devices are considered such that $d_{ik}$ is the distance between transmitter $i$ and receiver $k$. Then, using the relation between the path loss exponent $d_{ik}^{-\eta}$ which relates the loss of the transmitted power over the distance of the transmission path, we can induce the distance into the SNR in the defined utilities. Consider $\eta=2$ and the cooperative scenario-A section VI, the optimal cooperative $P_{j}^{*}$ is the solution of the quadratic equation,
\begin{eqnarray}
& \lambda {\alpha}^{2}G_{ab}G_{ae}{P_{j}}^{2} +\lambda({\sigma}^{2}\alpha G_{ab}{d_{ab}}^{2}+{\sigma}^{2}\alpha G_{ae}{d_{ae}}^{2})P_{j} \nonumber \\
&-({\sigma}^{2}G_{ab}{d_{ae}}^{2}-{\sigma}^{2}G_{ae}{d_{ab}}^{2}-\lambda {\sigma}^{4}{d_{ab}}^{2}{d_{ae}}^{2})=0&
\end{eqnarray}

and the optimal cooperative ${P_{a}}^{*}$ is the solution of the quadratic equation,
\begin{eqnarray} 
& \frac{\lambda G_{je}G_{jb}}{{\alpha}^{2}}{P_{a}}^{2}+\lambda(\frac{{\sigma}^{2}G_{jb}{d_{je}}^{2}}{\alpha}+\frac{{\sigma}^{2}G_{je}{d_{jb}}^{2}}{\alpha})P_{a}\nonumber \\
&-({\sigma}^{2}G_{jb}{d_{jb}}^{2}-{\sigma}^{2}G_{je}{d_{je}}^{2}-\lambda {\sigma}^{4}{d_{jb}}^{2}{d_{je}}^{2})=0&
\end{eqnarray}

\vspace{0.3cm}

In fact, inducing the distances between Alice, Eve, and Bob, or John, Eve, and Bob is not enough to make cooperation exit. Therefore, we need to consider the distance between Alice and John in the cooperation problem. Consider the effect of Alice in John and vice versa as an interference effect, thus the cooperative optimization problem in scenario-A, Section-VI is such that the information rate from Alice to John and vice versa will influence one another in a positive way, i.e., it is used to cancel the rate decay (leakage) to Eve. So, in some way or another, the optimization problem can be written as follows, 
\begin{equation}
{Cs1}_{\alpha P_{j}}+log(1+{SNR}_{ja})
\end{equation}
\begin{equation}
{Cs2}_{P_{a}/\alpha}+log(1+{SNR}_{aj})
\end{equation}

\vspace{0.3cm}

From a power allocation perspective, this formulation means that, $log(1+SNR_{ja})$ would substitute for $log(1+SNR_{ae})$ and $log(1+SNR_{aj})$ would substitute for $log(1+SNR_{je})$ as well, otherwise cooperation will not exist. Substitute the distances into the conditions discussed we get the following,
\small
\begin{equation}
\frac{\alpha G_{aj}}{{d_{ab}}^{2}{\sigma}^{2}+\alpha G_{aj}P_{j}}\leq \frac{\alpha G_{ae}}{{d_{ae}}^{2}{\sigma}^{2}+\alpha G_{ae}P_{j}}
\end{equation}
\normalsize
and,
\small
\begin{equation}
\frac{G_{ja}}{\alpha {d_{ab}}^{2}{\sigma}^{2}+G_{ja}P_{a}}\leq \frac{G_{je}}{\alpha {d_{je}}^{2}{\sigma}^{2}+G_{je}P_{a}}
\end{equation}
\normalsize

This leads to the condition that, if we need the cooperation to be of benefit for one or both parties, then the following distance constraints should exist. Similar analysis of the effect of the distance between different devices in a wiretap setup without cooperative scenarios has concluded that there is a distance consideration for which the secrecy can exist, otherwise not, and they call it secrecy coverage distance,~\cite{BarRod2011}.

\vspace{0.3cm}

Now for the setup of cooperative devices for the model in Figure \ref{fig:figure-2}, the distance between Alice and Eve should be,
\small
\begin{equation}
{d_{ae}}^{\eta}\leq \frac{G_{ae}}{G_{aj}}{d_{aj}}^{\eta}
\end{equation}
\normalsize

and the distance between John and Eve should be, 
\small
\begin{equation}
{d_{je}}^{\eta}\leq \frac{G_{je}}{G_{ja}}{d_{aj}}^{\eta}
\end{equation}
\normalsize

If such distance consideration exists, then the cooperative power allocation strategies are optimal in the sense of optimal cooperation level.

\vspace{0.3cm}

Hence, Eve can try to break such distance constraints going more near to one or both devices she wants to eavesdrop, i.e., moving the cooperation level into less cooperative and so the achievable secure and reliable rates into lower bounds. 
%

\section{ALGORITHM}
\label{ALGORITHM}
%

We introduce a distributed algorithm that finds the optimal power allocation set to secure the data transmission for the model in Figure \ref{fig:figure-1}. First, the devices will check the distance constraints to test if cooperation is of benefit. If yes, then the devices will initiate the cooperation and will decides to cooperate jointly, not to cooperate, or to cooperate from one side. Therefore, the optimal power allocation for Alice and John will follow the solution set of the scenarios discussed.
\begin{algorithm}[h!]
%
Alice$\rightarrow$initiates cooperation mode.\\
Alice$\rightarrow$requests relay service
\BlankLine
\SetKwInOut{Input}{Input}\SetKwInOut{Output}{Output}
			\DontPrintSemicolon
            \BlankLine			
            \Input{ distance $d_{ab}$, $d_{ae}$, $d_{jb}$, $d_{je}$, $d_{aj}$.}
			\BlankLine
			%
			\uIf{{$\frac{\alpha G_{aj}}{{d_{ab}}^{2}{\sigma}^{2}+\alpha G_{aj}P_{j}}\leq \frac{\alpha G_{ae}}{{d_{ae}}^{2}{\sigma}^{2}+\alpha G_{ae}P_{j}}$} and 
			 \BlankLine
			{$\frac{G_{ja}}{\alpha {d_{ab}}^{2}{\sigma}^{2}+G_{ja}P_{a}}\leq \frac{G_{je}}{\alpha {d_{je}}^{2}{\sigma}^{2}+G_{je}P_{a}}$} and 
			 \BlankLine
			 {${d_{ae}}^{2}\leq \frac{G_{ae}}{G_{aj}}{d_{aj}}^{2}$} and 
			  \BlankLine
			 {${d_{je}}^{2}\leq \frac{G_{je}}{G_{ja}}{d_{aj}}^{2}$}}
			 \BlankLine
			{
				\Indp
								John$\rightarrow$accepts to cooperate and decide cooperation with level $\alpha$ and request Alice relay service. 
				}
				 \BlankLine
				\uIf{Alice accepts to cooperate;}{${Output:\bf{1}}$ is executed.}

			\vspace{0.01cm}
			
			\uElseIf{John$\rightarrow$rejects to relay data and devices go to MAC cooperative mode;}{${Output:\bf{2}}$ is executed.}
			
			\vspace{0.01cm}
			
			\uElseIf{John$\rightarrow$accepts to cooperate from one side;}{${Output:\bf{3}}$ is executed.}
			
			\vspace{0.01cm}
			
			\uElseIf{John$\rightarrow$rejects to cooperate and devices go to non-cooperative mode;}{${Output:\bf{4}}$ is executed.}
			
			\vspace{0.01cm}
			
			\uElse{${Output:\bf{4}}$ is executed.}
			 \BlankLine
			\Output{\bf{1}}{
              The optimal cooperative relay power:\\
              \vspace{0.1cm}
              ${P_{ab}}^{*}$ solving equation~\eqref{8}\\
              \vspace{0.1cm}
              ${P_{jb}}^{*}$ solving equation~\eqref{9}
}
        \BlankLine
       \vspace{0.1cm}
        \Output{\bf{2}}{
              The optimal cooperative main power:\\  
               ${P_{a}}^{*}$ in~\eqref{40}\\
              \vspace{0.1cm}
              ${P_{j}}^{*}$ in~\eqref{41}
}
        \BlankLine
        \vspace{0.1cm}
        \Output{\bf{3}}{

              The optimal cooperative / non-cooperative main power:\\ 
             \vspace{0.1cm}
             ${P_{a}}^{*}$ in~\eqref{54}\\
              \vspace{0.1cm}
              ${P_{j}}^{*}$ in~\eqref{55}                          
%
%
}
              \BlankLine
              \vspace{0.1cm}
        \Output{\bf{4}}{
              The optimal non-cooperative main power:\\ 
              ${P_{a}}^{*}$ in~\eqref{69}\\
              \vspace{0.1cm}
              ${P_{j}}^{*}$ in~\eqref{70}
}
              \BlankLine              
Devices keep checking distance constraint and adaptively allocate their optimal power based on the cooperation scenario selected.
\caption{Optimum Cooperative Power Allocation.}\label{algo}
\end{algorithm}

\section{Numerical Results}
\label{results}

We shall now present a set of illustrative results that cast further insights to the problem. We choose a cooperation level $\alpha=0.8,$ channels gains are $G_{ab}=0.4, \ G_{ae}=0.3, \ G_{ja}=0.2, \ G_{jb}=0.5,$ and $G_{je}=0.3$. We now analyze the set of scenarios considered. Notice that we have chosen channel gains for a non-degraded case, where the channels between legitimate transmitters and the legitimate receiver are stronger than those between legitimate transmitters and the illegitimate receiver. Figure \ref{fig:figure-3} illustrates the achievable secrecy rate for Alice with respect to the main power $P_a$ and the power $P_{jb}$ used to relay her. Figure \ref{fig:figure-4} illustrates the achievable secrecy rate for John with respect to the main power $P_j$ and the power $P_{ab}$ used to relay his data. As expected, the framework of cooperation via relaying adds significantly to the achievable secure data rates of each device compared to the data rates achieved without cooperation. The difference between the gains in the data rates for Alice and John is due to the stronger channel gain that John enjoys between his device and the receiver device, Bob.

\vspace{0.3cm}

Figure \ref{fig:figure-5} illustrates the achievable secrecy rates of Alice with respect to her distance from Bob. The secrecy rates has been simulated under different distances between Alice and Eve $d_{ae}$ and between John and Bob $d_{jb}$. As already explained analytically in the previous sections, such distances are associated to the SNRs obtained without and with relaying. Therefore, it is of particular relevance to observe that the distance of the eavesdropper and the transmitter has fundamental role in deciding whether the cooperation is of benefit or not. This result shows that as long as the distances between the legitimate transmitters and the legitimate receiver are smaller than the distances between the legitimate transmitters and the eavesdropper, the achievable secrecy gains are noticeable, and as expected with relaying, the secrecy rates are higher. Therefore, cooperative relaying is of benifit. However, its interestingly shown that if John is too much far from Bob, the gains expected from relaying are very limited, and at some point going into no cooperation could be of more benefit to the legitimate transmitter. Similar analysis applies to the achievable secrecy rates of John with respect to his distance from Bob.
\begin{figure}[ht!]
    \begin{center}
        \mbox{\includegraphics[height=2in]{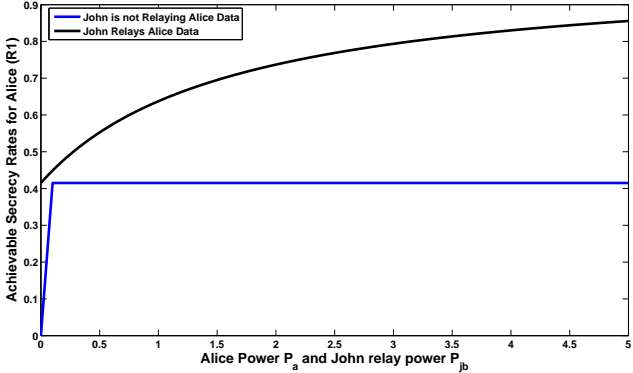}}
    \caption{Secure data rates achievable by Alice (with and without relaying) with respect to the main power $P_a$, (when no cooperation exists) and the relay power $P_{jb}$ (when $P_a=P_A$ is fixed and equals 5 and cooperative relaying is active).}
    \label{fig:figure-3}
    \end{center}
    \label{figure-3}
\end{figure}
\begin{figure}[ht!]
    \begin{center}
        \mbox{\includegraphics[height=2in]{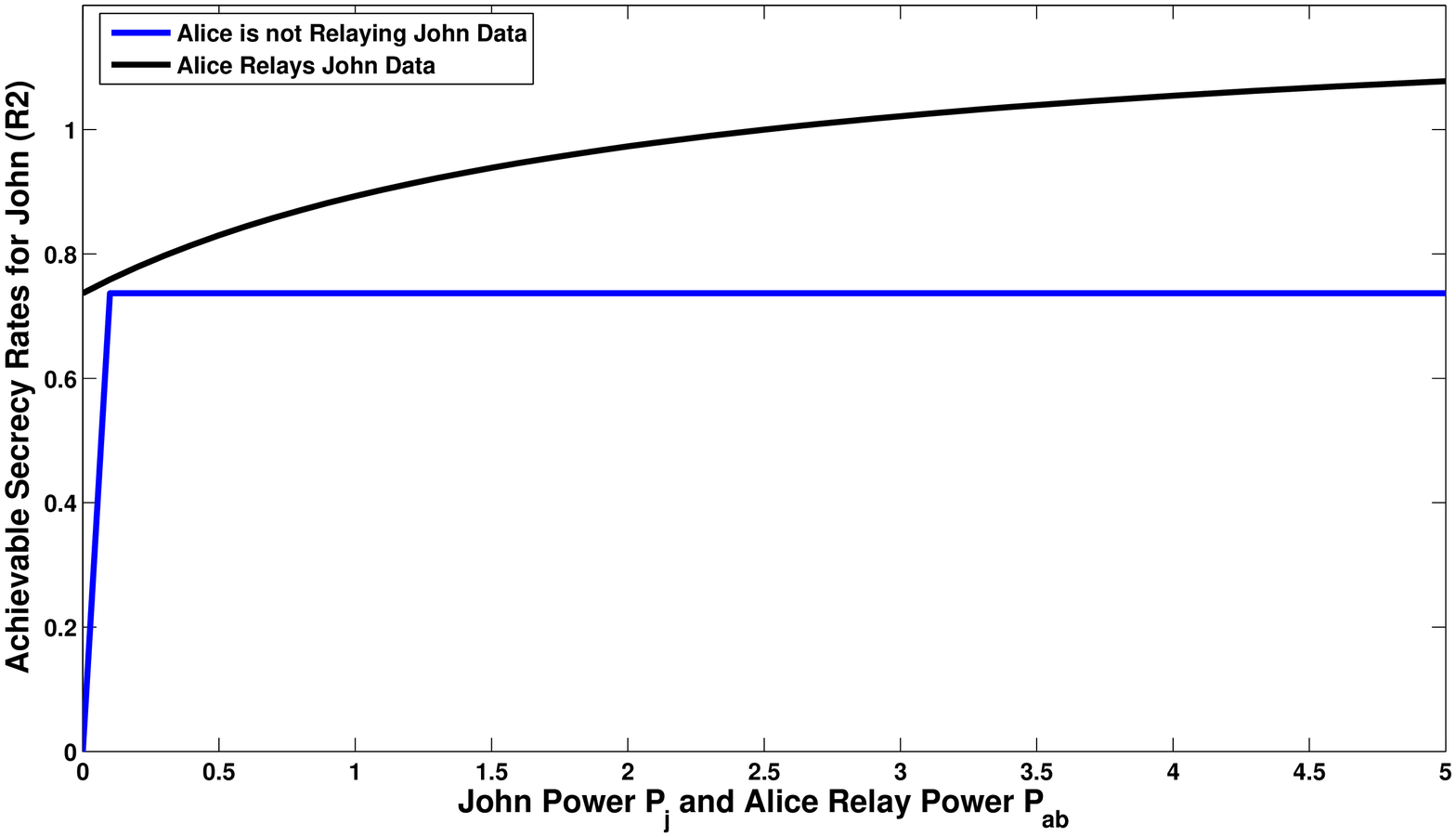}}
    \caption{Secure data rates achievable by John (with and without relaying) with respect to the main power $P_j$, (when no cooperation exists) and the relay power $P_{ab}$ (when $P_j=P_J$ is fixed and equals 5 and cooperative relaying is active).}
    \label{fig:figure-4}
    \end{center}
    \label{figure-4}
\end{figure}

\vspace{0.3cm}

\begin{figure}[ht!]
    \begin{center}
        \mbox{\includegraphics[height=2in]{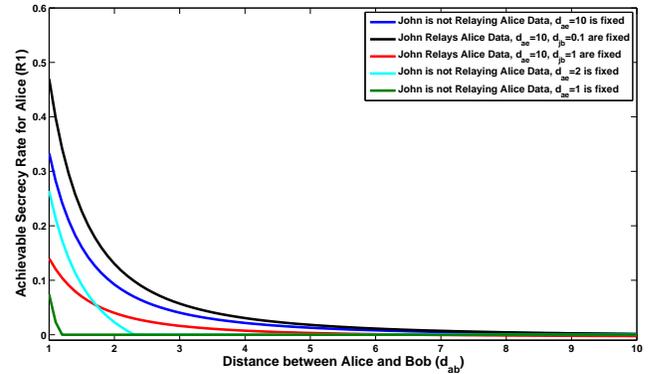}}
    \caption{Secure data rates achievable by Alice, with and without relaying, with respect to the distance between Alice and Bob $(d_{ab})$, and under different distances $(d_{ae}, d_{jb}$ between Alice and Eve and John and Bob, respectively.}
    \label{fig:figure-5}
    \end{center}
    \label{figure-5}
\end{figure}
%
The results in Figure \ref{fig:figure-6} to Figure \ref{fig:figure-8} show the achievable secrecy rate with respect to the optimized cooperative power of the other device. Note that any set of optimum points will lead to a point at which the achievable secrecy rate is of maximum value. Note also that, it is clear that the optimal power allocation in the scenario-VII will be to use the maximum power so the achievable rates will be fixed for all fixed instantaneous measures of the channels gains and thus we didn't include it here. In Figure \ref{fig:figure-6} which considers cooperative scenario-A with relaying, we can see the non-linear behavior in terms of the optimal cooperative power. We can also see that higher rates are achievable at the same cooperative powers in comparison to scenario-A in Figure \ref{fig:figure-7} where relaying is not implemented. 

\vspace{0.3cm}

Therefore, with power cooperation and without data being relayed, the achievable rates are supposed to be less. We can also see at one chosen point of cooperative relaying power for Alice and John; i.e., for a specific achievable rate  of 20, the corresponding cooperative power set $(P_{ab},P_{jb})$ is (8.32, 11.25) for Alice and John, respectively. We can choose any other optimal set for scenario-A with relaying in Figure \ref{fig:figure-6}. Similarly, for scenario-A  in Figure \ref{fig:figure-7} without relaying, we see that at the cooperative power set $(P_{a},P_{j})$ equals (7.76, 13) the achievable secrecy rate is 10 via the same process. We can also see that both Figure \ref{fig:figure-6} and Figure \ref{fig:figure-7} show the effect on the secrecy rates when the distance conditions are met, we can see that the cooperation is of benefit for both devices, Alice and John, and their cooperation leads to maximizing their achievable secrecy rates.
\begin{figure}[ht!]
    \begin{center}
        \mbox{\includegraphics[height=2in]{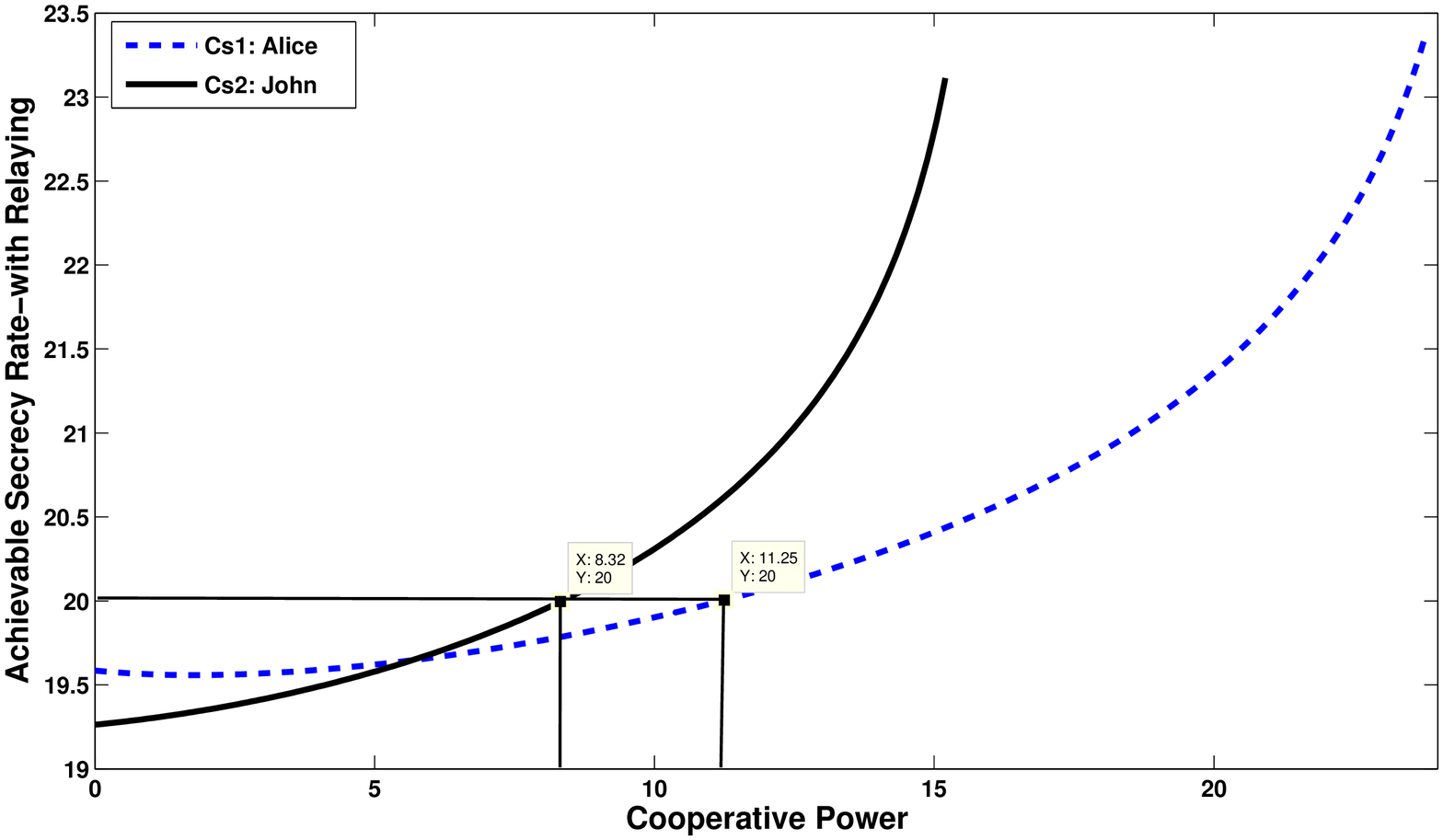}}
    \caption{Achievable secrecy rates, with relaying and with distance constraints met, versus the optimal cooperative power $({P_{ab}}^{*}, {P_{jb}}^{*})$.}
    \label{fig:figure-6}
    \end{center}
    \label{figure-6}
\end{figure}
\begin{figure}[ht!]
    \begin{center}
        \mbox{\includegraphics[height=2in]{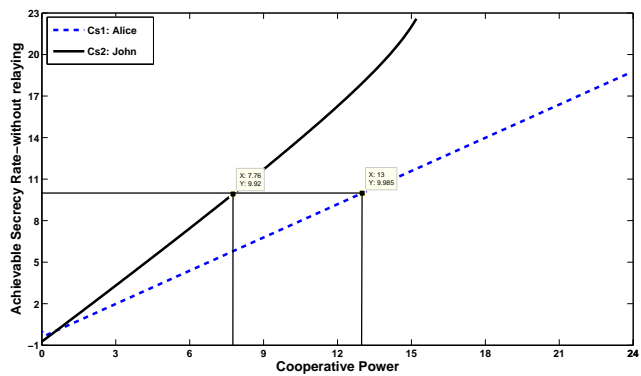}}
    \caption{Achievable secrecy rates, without relaying and with distance constraints met, versus the optimal cooperative power $({P_{a}}^{*}, {P_{j}}^{*})$.}
    \label{fig:figure-7}
    \end{center}
    \label{figure-7}
\end{figure}

\vspace{0.3cm}

However, we can see in Figure \ref{fig:figure-8}, where the distance condition is not met, thus cooperation will be of no benefit at least for one of the devices, i.e., the achievable secrecy rate for the device that doesn't obey the distance constraints decreases as the cooperation increases from the other device. Therefore, as shown in Figure \ref{fig:figure-8}, the cooperation is adversely affecting the secrecy rate for Alice; since the condition of the distance between Eve and Alice compared to the distance between John and Alice is not met, here we can easily pick up a point where devices will chose that cooperation will not exist, that is the point of intersection, which can also correspond to equal cooperation from both sides, that is at $\alpha=1.$ 
\begin{figure}[ht!]
    \begin{center}
        \mbox{\includegraphics[height=2in]{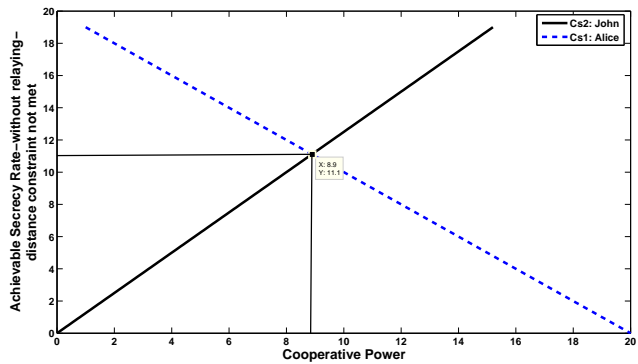}}
    \caption{Achievable secrecy rates, without relaying and with distance constraints not met, versus the optimal cooperative power $({P_{a}}^{*}, {P_{j}}^{*})$.}
    \label{fig:figure-8}
    \end{center}
    \label{figure-8}
\end{figure} 
%

\section{Conclusion}
\label{conclusions}
%

The main contribution of this paper is a classification of the achievable secrecy rate regions for a model with cooperative relaying at the two legitimate transmitters and with no relaying where the model forms a MAC channel. A derivation of different optimal cooperative power allocation interpretations that aim to maximize the achievable secrecy rates for the devices cooperating in a wireless medium have been derived. We exploited the cooperation concept emulating the wireless medium fading via the cooperative diversity induced by relaying; that is known to add positively to the achievable rates and consequently to the achievable secrecy rates. We evaluated the cooperation schemes to the non-cooperative ones in a distributed optimization formulation. We propose a distributed algorithm to measure the optimal power allocation. In particular, the optimal cooperative power allocation in the different scenarios interestingly follows an inverse waterfilling interpretation. We build upon distance constraints that define when the cooperation will be of benefit or not; i.e., the distance constraints will be the trigger for the cooperative service request which leads to an optimal power control in an adaptive sense, specifically if mobility is considered and channel state information is shared between the legitimate transmitting devices. This setup can contribute to some clustering criteria in certain wireless setups, through which such constraints are induced into the main optimal cooperative power allocation. The paper sheds light on the feasibility of cooperation in wireless clouds. The cooperation concept seems so promising and appealing for next generation wireless networks, however, to secure data transmission such device to device cooperation should be associated with adaptive and distributed algorithms that constraint the global cloud cooperation when cooperating devices will cause interference and so jam the main transmission or when possible common or active eavesdroppers  exist. From a security perspective, there is should be a framework for cooperation that can be customizable to the application.

\bibliographystyle{IEEEtran}
\bibliography{IEEEabrv,mybibfile}

\end{document}